\renewcommand\leq\leqslant
\renewcommand\geq\geqslant
\newtheorem{observation}[theorem]{Observation}
\newcommand{\NEQCLkl}{\textsc{Partial VC Dimension}\xspace}
\newcommand{\MaxNEQCL}{\textsc{Max Partial VC Dimension}\xspace}
\newcommand{\PBTC}{\textsc{Test Cover}\xspace}
\newcommand{\PBDS}{\textsc{Distinguishing Transversal}\xspace}
\newcommand{\MinPBDS}{\textsc{Min Distinguishing Transversal}\xspace}
\newcommand{\VCDIM}{\textsc{VC Dimension}\xspace}
\newcommand{\MaxVCDIM}{\textsc{Max VC Dimension}\xspace}
\newcommand{\MaxVC}{\textsc{Max Partial Vertex Cover}\xspace}
\newcommand{\MinVC}{\textsc{Min Vertex Cover}\xspace}
\newcommand{\apx}{\mathsf{APX}}
\newlength{\atextwidth}
\newcommand{\fullversion}[1]{#1}
\newcommand{\confversion}[1]{}
\newcounter{Bew1}
\newcounter{Bew2}
\newcounter{Def1}
\newcommand{\problemdec}[3]{
  \vspace{2mm}
\noindent\fbox{
  \begin{minipage}{\atextwidth}
  \begin{tabular*}{\textwidth}{@{\extracolsep{\fill}}lr} #1 \\ \end{tabular*}
  {\bf{Input:}} #2  \\
  {\bf{Question:}} #3
  \end{minipage}
  }
  \vspace{2mm}
}
\newcommand{\problemopt}[3]{
  \vspace{2mm}
\noindent\fbox{
  \begin{minipage}{\atextwidth}
  #1 \\
  {\bf{Input:}} #2  \\
  {\bf{Output:}} #3
  \end{minipage}
  }
  \vspace{2mm}
}
\let\doendproof\endproof
\renewcommand\endproof{~\hfill\qed\doendproof}
\begin{document}

\fullversion{\title{Parameterized and Approximation Complexity of \NEQCLkl\thanks{A preliminary version of this work containing the approximation complexity results appeared in~\cite{confpaper}.}}}
\confversion{\title{On the Approximability of \NEQCLkl}}
\author{Cristina Bazgan \inst{1,\thanks{Institut Universitaire de France}} \and Florent Foucaud\inst{2} \and Florian Sikora \inst{1}}

\institute{Universit\'{e} Paris-Dauphine, PSL Research University, CNRS, LAMSADE, PARIS, FRANCE \\ \email{\{cristina.bazgan,florian.sikora\}@dauphine.fr} \and  Universit\'e Clermont Auvergne, LIMOS, CNRS, AUBIERE, FRANCE\\ \email{florent.foucaud@gmail.com}}

\maketitle

\begin{abstract}
We introduce the problem \NEQCLkl that asks, given a hypergraph $H=(X,E)$ and integers $k$ and $\ell$, whether one can select a set $C\subseteq X$ of $k$ vertices of $H$ such that the set $\{e\cap C, e\in E\}$ of distinct hyperedge-intersections with $C$ has size at least $\ell$. 
The sets $e\cap C$ define equivalence classes over $E$. \NEQCLkl is a generalization of \VCDIM, which corresponds to the case $\ell=2^k$, and of \PBDS, which corresponds to the case $\ell=|E|$ (the latter is also known as \PBTC in the dual hypergraph). 
We also introduce the associated fixed-cardinality maximization problem \MaxNEQCL that aims at maximizing the number of equivalence classes induced by a solution set of $k$ vertices. 
\fullversion{We study the algorithmic complexity of \NEQCLkl and \MaxNEQCL both on general hypergraphs and on more restricted instances, in particular, neighborhood hypergraphs of graphs.}
\confversion{We study the approximation complexity of \MaxNEQCL on general hypergraphs and on more restricted instances, in particular, neighborhood hypergraphs of graphs.}
\end{abstract}

\begin{keywords} VC Dimension; Distinguishing Transversal; Partial Problems; Hypergraphs; Parameterized Complexity; Approximation Complexity
\end{keywords}

\section{Introduction}

We study identification problems in discrete structures. 
Consider a hypergraph (or set system) $H=(X,E)$, where $X$ is the vertex set and $E$ is a collection of hyperedges, that is, subsets of $X$. 
Given a subset $C\subseteq X$ of vertices, we say that two hyperedges of $E$ are \emph{distinguished} (or \emph{separated}) by $C$ if some element in $C$ belongs to exactly one of the two hyperedges. 
In this setting, one can tell apart the two distinguished hyperedges simply by comparing their intersections with $C$. 
Following this viewpoint, one may say that two hyperedges are related if they have the same intersection with $C$. This is clearly an equivalence relation, and one may determine the collection of equivalence classes induced by $C$: each such class corresponds to its own subset of $C$. 
Any two hyperedges belonging to distinct equivalence classes are then distinguished by $C$. We call these classes \emph{neighborhood equivalence classes}.
In general, one naturally seeks to distinguish as many pairs of hyperedges as possible, using a small set $C$.

It is a well-studied setting to ask for a maximum-size set $C$ such that $C$ induces all possible $2^{|C|}$ equivalence classes. 
In this case, $C$ is said to be \emph{shattered}. 
The maximum size of a shattered set in a hypergraph $H$ is called its \emph{Vapnis-\v{C}ervonenkis dimension} (VC dimension for short). This notion, introduced by Vapnis and \v{C}ervonenkis~\cite{VC71} arose in the context of statistical learning theory as a measure of the structural complexity of the data. Hypergraphs of fixed VC dimension can be seen as special restrictions of \emph{covering arrays}, a type of combinatorial structure (see~\cite{covarr} for a survey). VC dimension has been widely used in discrete mathematics: see the references in the thesis~\cite{bousquet}. We have the following associated decision problem.

\smallskip
\problemdec{\VCDIM}{A hypergraph $H=(X,E)$, and an integer $k$.}{Is there a shattered set $C\subseteq X$ of size at least $k$ in $H$?}

The complexity of \VCDIM was studied in e.g.~\cite{CHKX06,DEF93,PY96}; it is a complete problem for the complexity class LOGNP defined in~\cite{PY96} (it is therefore a good candidate for an NP-intermediate problem). \VCDIM remains LOGNP-complete for \emph{neighborhood hypergraphs of graphs}~\cite{KKRUW97}: the \emph{neighborhood hypergraph} of $G$ has $V(G)$ as its vertex set, and the set of closed neighborhoods of vertices of $G$ as its hyperedge set.

In another setting, one wishes to distinguish \emph{all} pairs of hyperedges (in other words, each equivalence class must have size $1$) while minimizing the size of the solution set $C$. 
Following~\cite{HY14}, we call the associated decision problem, \PBDS. 

\smallskip
\problemdec{\PBDS}{A hypergraph $H=(X,E)$, and an integer $k$.}{Is there a set $C\subseteq X$ of size at most $k$ that induces $|E|$ distinct equivalence classes?}

There exists a rich literature about \PBDS. It was studied under different names, such as \textsc{Test Set} in Garey and Johnson's book~\cite[SP6]{GJ79}; other names include \PBTC~\cite{CGJMY16,CGJSY12,DHHHLRS03}, \textsc{Discriminating Code}~\cite{CCCHL08} or \textsc{Separating System}~\cite{BS07,R61}.\footnote{Technically speaking, in \textsc{Test Set}, \PBTC and \textsc{Separating System}, the goal is to distinguish the \emph{vertices} of a hypergraph using a set $C$ of \emph{hyperedges}, and in \textsc{Discriminating Code} the input is presented as a bipartite graph. 
Nevertheless, these formulations are equivalent to \PBDS by considering either the dual hypergraph of the input hypergraph $H=(X,E)$ (with vertex set $E$ and hyperedge set $X$, and hyperedge $x$ contains vertex $e$ in the dual if hyperedge $e$ contains vertex $x$ in $H$), or the bipartite incidence graph (defined over vertex set $X\cup E$, and where $x$ and $e$ are adjacent if they were incident in $H$).} A famous theorem of Bondy~\cite{B72} also implicitly studies this notion. \PBDS restricted to neighborhood hypergraphs of graphs is equivalent to the graph problem \textsc{Identifying Code} studied for example in~\cite{F13,KCL98}.

The goal of this paper is to introduce and study the problem \NEQCLkl, that generalizes both \PBDS and \VCDIM, and defined as follows.

\problemdec{\NEQCLkl}{A hypergraph $H=(X,E)$, and two integers $k$ and $\ell$.}{Is there a set $C\subseteq X$ of size $k$ that induces at least $\ell$ distinct equivalence classes?}




\NEQCLkl belongs to the category of \emph{partial} versions of common decision problems, in which, instead of satisfying the problem's constraint task for all elements (here, all $2^{k}$ equivalence classes), we ask whether we can satisfy a certain number, $\ell$, of these  constraints. 
See for example the papers~\fullversion{\cite{CMPS14,FLR11,KMR07}}\confversion{\cite{FLR11,KMR07}} that study some partial versions of standard decision problems, such as \textsc{Set Cover}\fullversion{, \textsc{Vertex Cover}} or \textsc{Dominating Set}.

When $\ell=|E|$, \NEQCLkl is precisely the problem \PBDS. When $\ell=2^k$, we have the problem \VCDIM. Hence, \NEQCLkl is NP-hard, even on many restricted classes. Indeed, \PBDS is NP-hard~\cite{GJ79}, even on hypergraphs where each vertex belongs to at most two hyperedges~\cite{DHHHLRS03}, or on neighborhood hypergraphs of graphs that are either: unit disk~\cite{MS09}, planar bipartite subcubic~\cite{F13}, interval~\cite{FMNPV15}, permutation~\cite{FMNPV15}, or split~\cite{F13}.
\MinPBDS cannot be approximated within a factor of $o(\log n)$ on hypergraphs of order $n$~\cite{DHHHLRS03}, even on hypergraphs without $4$-cycles in their bipartite incident graph~\cite{BLLPT15}, 
and on neighborhood hypergraphs of bipartite, co-bipartite or split graphs~\cite{F13}.

When $\ell=2^k$, \NEQCLkl is equivalent to \VCDIM and unlikely to be NP-hard (unless all problems in NP can be solved in quasi-polynomial time), since $|X|\leq 2^k$ and a simple brute-force algorithm has quasi-polynomial running time. 
Moreover, \VCDIM (and hence \NEQCLkl) is W[1]-complete when parameterized by $k$~\cite{DEF93}.

Recently, the authors in~\cite{BKMN15} introduced the notion of \emph{$(\alpha,\beta)$-set systems}, that is, hypergraphs where, for any set $S$ of vertices with $|S|\leq \alpha$, $S$ induces at most $\beta$ equivalence classes. Using this terminology, if a given hypergraph $H$ is an $(\alpha,\beta)$-set system, $(H,k,\ell)$ with $k=\alpha$ is a YES-instance of \NEQCLkl if and only if $\ell\leq\beta$.

We will also study the approximation complexity of the following fixed-cardinality maximization problem associated to \NEQCLkl.

\problemopt{\MaxNEQCL}{A hypergraph $H=(X,E)$, and an integer $k$.}{A set $C\subseteq X$ of size $k$ that maximizes the number of equivalence classes induced by $C$.}

Similar \emph{fixed-cardinality} versions of classic optimization problems such as \textsc{Set Cover}, \textsc{Dominating Set} or \textsc{Vertex Cover}, derived from the "partial" counterparts of the corresponding decision problems, have gained some attention in the recent years, \fullversion{see for example the papers~\cite{BKL12,C07,randomsep,CMPS14,feige,KMR07} and the survey~\cite{BEHM06}.}
\confversion{see for example~\cite{C07,KMR07,BEHM06}.}

\MaxNEQCL is clearly NP-hard since \NEQCLkl is NP-complete; other than that, its approximation complexity is completely unknown since it cannot be directly related to the one of approximating \MinPBDS or \MaxVCDIM (the minimization and maximization versions of \PBDS and \VCDIM, respectively).

\paragraph{Our results.} 
\fullversion{We first study the decision problem \NEQCLkl. In Section~\ref{sec:nat-param}, we show that \NEQCLkl parameterized by $k$ or $\ell$ is W[1]-complete. The W[1]-hardness holds even for neighborhood hypergraphs of bipartite, split or co-bipartite graphs. On the other hand, the problem is in FPT for neighborhood hypergraphs of graphs of bounded maximum degree.
In Section~\ref{sec:dual-param}, we study the dual parameterizations of \NEQCLkl. In particular, the problem is  W[1]-hard parameterized by   $|X|-k$.}
\fullversion{
We next turn our attention to the approximation complexity of \MaxNEQCL. 
}
\confversion{Our focus is on the approximation complexity of \MaxNEQCL.}
We give positive results in Section~\ref{sec:pos-approx}. We first provide polynomial-time approximation algorithms using the VC dimension, the maximum degree and the maximum edge-size of the input hypergraph. 
We apply these to obtain approximation ratios of the form $n^{\delta}$ (for $\delta<1$ a constant) in certain special cases, as well as a better approximation ratio but with exponential running time. 
For neighborhood hypergraphs of planar graphs, \MaxNEQCL admits an EPTAS (this is also shown for \MinPBDS). 
In Section~\ref{sec:neg-approx}, we give hardness results. 
We show that any $2$-approximation algorithm for \MaxNEQCL implies a $2$-approximation algorithm for \MaxVCDIM. 
Finally, we show that \MaxNEQCL admits no PTAS (unless P$=$NP), even for graphs of maximum degree at most~$7$.
\fullversion{We start the paper with some preliminaries in Section~\ref{sec:prelim} and conclude in Section~\ref{sec:conclu}.}

%
%
%
%
%
%

\section{Preliminaries}\label{sec:prelim}

\paragraph{Generalities.}

The following easy but important observation is used implicitly when studying \NEQCLkl.

\begin{observation}\label{obs:add-remove}
In a hypergraph $H=(X,E)$, given two sets $S$ and $S'$ with $S\subset S'\subseteq X$, $S'$ induces at least as many equivalence classes as $S$ in $H$.
\end{observation}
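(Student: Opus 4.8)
The plan is to observe that the partition of the hyperedge set $E$ induced by $S'$ is a \emph{refinement} of the partition induced by $S$, and then invoke the elementary fact that a refinement of a partition has at least as many parts. This immediately yields the desired inequality on the number of equivalence classes.

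Concretely, I would first record the two equivalence relations explicitly: for hyperedges $e,f\in E$, write $e\sim_S f$ when $e\cap S=f\cap S$, and $e\sim_{S'} f$ when $e\cap S'=f\cap S'$. The equivalence classes of $\sim_S$ (resp.\ $\sim_{S'}$) are exactly the neighborhood equivalence classes induced by $S$ (resp.\ $S'$), so it suffices to compare the numbers of classes of these two relations. The crucial step is the containment $e\sim_{S'} f \implies e\sim_S f$. This follows by intersecting the equality $e\cap S'=f\cap S'$ with $S$ and using $S\subseteq S'$, which gives $(e\cap S')\cap S=e\cap S$ and likewise $(f\cap S')\cap S=f\cap S$; hence $e\cap S=f\cap S$, i.e.\ $e\sim_S f$.

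From this containment, every $\sim_{S'}$-class is contained in a single $\sim_S$-class, so each $\sim_S$-class is a disjoint union of one or more $\sim_{S'}$-classes. Equivalently, the well-defined map sending the $\sim_{S'}$-class of a hyperedge to its $\sim_S$-class is surjective onto the set of $\sim_S$-classes, which forces the number of $\sim_{S'}$-classes to be at least the number of $\sim_S$-classes. Translating back, $S'$ induces at least as many equivalence classes as $S$.

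I do not expect any genuine obstacle here: the statement is a direct consequence of the set-theoretic fact $A\cap S=A\cap S'\cap S$ when $S\subseteq S'$. The only point requiring a modicum of care is the counting step, namely the passage from ``$\sim_{S'}$ refines $\sim_S$'' to ``$\sim_{S'}$ has at least as many classes,'' which is handled cleanly by exhibiting the surjection above; the strictness hypothesis $S\subset S'$ is not even needed, as the argument works verbatim for $S\subseteq S'$.
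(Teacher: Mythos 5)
Your proof is correct: the refinement argument via the implication $e\cap S'=f\cap S'\Rightarrow e\cap S=f\cap S$ and the resulting surjection from $\sim_{S'}$-classes onto $\sim_S$-classes is exactly the natural justification. The paper states this as an observation without proof, and your argument (including the remark that only $S\subseteq S'$ is needed) supplies the intended easy reasoning.
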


The following is an easy reformulation of our main problem.

\begin{observation}\label{obs:reform}
A set $S$ of vertices of a hypergraph $(X,E)$ induces $\ell$ equivalence classes if and only if there is a set $L\subseteq E$ of $\ell$ distinct hyperedges such that for any two distinct hyperedges of $L$, there is a vertex of $S$ that belongs to exactly one of them.
\end{observation}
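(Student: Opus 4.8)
The plan is to prove Observation~\ref{obs:reform} by unwinding the definition of the equivalence relation induced by $S$ and exhibiting a bijection between equivalence classes and a suitable transversal-like set $L$ of hyperedges. Recall that $S$ partitions $E$ by declaring $e \sim_S e'$ whenever $e \cap S = e' \cap S$. The number of equivalence classes is by definition the number of distinct sets in the collection $\{e \cap S : e \in E\}$. The entire content of the statement is therefore the translation of the condition ``$e \cap S \neq e' \cap S$'' into the vertex-level condition ``some vertex of $S$ lies in exactly one of $e,e'$.''

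First I would establish this pointwise equivalence, which is the genuine (if short) core of the argument: for two hyperedges $e,e'$ we have $e \cap S \neq e' \cap S$ if and only if the symmetric difference $(e \triangle e') \cap S$ is nonempty, i.e. there exists a vertex $x \in S$ belonging to exactly one of $e$ and $e'$. This is immediate since $e \cap S = e' \cap S$ precisely when no element of $S$ separates $e$ from $e'$. In other words, two hyperedges lie in distinct equivalence classes under $\sim_S$ if and only if they are distinguished by $S$ in the sense defined in the introduction.

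Next I would use this to prove both directions. For the forward direction, suppose $S$ induces exactly $\ell$ equivalence classes; choosing one representative hyperedge from each class yields a set $L \subseteq E$ of $\ell$ hyperedges that pairwise lie in distinct classes, and by the pointwise equivalence every pair in $L$ is separated by some vertex of $S$. For the converse, suppose such a set $L$ of $\ell$ hyperedges exists in which every pair is separated by a vertex of $S$; then by the pointwise equivalence the $\ell$ hyperedges of $L$ are pairwise inequivalent under $\sim_S$, so they occupy $\ell$ distinct equivalence classes and hence $S$ induces at least $\ell$ classes. Combining the two directions (and reading the statement with ``at least $\ell$'' in the natural monotone sense, consistent with Observation~\ref{obs:add-remove}) gives the claim.

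There is essentially no hard part here: the statement is a definitional reformulation, and the only thing to be careful about is the bookkeeping between ``exactly $\ell$ classes'' and ``a set of $\ell$ pairwise-distinguished hyperedges,'' which is handled cleanly by passing to class representatives in one direction and observing pairwise inequivalence in the other. I expect the write-up to be a few lines, with the symmetric-difference characterization of distinguishability doing all the work.
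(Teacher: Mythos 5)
Your proof is correct, and it fills in exactly the definitional unwinding that the paper leaves implicit (the paper states this observation without proof, calling it an easy reformulation). The symmetric-difference characterization of distinguishability together with choosing class representatives is precisely the intended argument.
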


\paragraph{Graphs versus hypergraphs.}

In many places, we study \NEQCLkl on neighborhood hypergraphs of graphs. In that case, one may naturally view \NEQCLkl as a problem on graphs (this is also usually done for the special cases \VCDIM, as in~\cite{bousquet,KKRUW97}, and \PBDS, as in~\cite{F13}). Indeed, it becomes equivalent to select a set $S$ of $k$ vertices of the graph. Each equivalence class induced by $S$ corresponds to a set $N[v]\cap S$ for some vertex $v$ (where $N[v]$ denotes the closed neighborhood of vertex $v$). Therefore, in some of our proofs dealing with neighborhood hypergraphs of graphs, we may (implicitly) treat \NEQCLkl as a graph problem for the simplicity of our exposition.

\paragraph{Twin-free hypergraphs.}

In a hypergraph $H$, we call two equal hyperedges \emph{twin hyperedges}. Similarly, two vertices belonging to the same set of hyperedges are \emph{twin vertices}.

Clearly, two twin hyperedges will always belong to the same neighborhood equivalence classes. Similarly, for any set $T$ of mutually twin vertices, there is no advantage in selecting more than one of the vertices in $T$ when building a solution set $C$.

\begin{observation}\label{obs}
Let $H=(X,E)$ be a hypergraph and let $H'=(X',E')$ be the hypergraph obtained from $H$ by deleting all but one of the hyperedges or vertices from each set of mutual twins. Then, for any set $C\subseteq X$, the equivalence classes induced by $C$ in $H$ are the same as those induced by $C\cap X'$ in $H'$.
\end{observation}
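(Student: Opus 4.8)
The plan is to separate the two kinds of twins and to reduce any candidate set $C$ to one that lives inside $X'$. I would first dispose of the hyperedge-twins: if $e$ and $f$ are twin hyperedges then $e=f$ as subsets of $X$, so $e\cap C=f\cap C$ for every $C\subseteq X$; thus twin hyperedges are never separated and always share a class, and deleting all but one of them neither creates nor destroys a class. This is why it is harmless to describe the hyperedges of $H'$ simply as the distinct sets among $\{e\cap X':e\in E\}$.

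The substantive part concerns the twin vertices, and here I would invoke the reformulation of Observation~\ref{obs:reform}: a vertex set separates two hyperedges exactly when one of its vertices lies in precisely one of them. The crucial point is that a deleted twin vertex $z$ and its retained representative $x\in X'$ belong to the same hyperedges, so for any two hyperedges $e,f$ the vertex $z$ lies in exactly one of $e,f$ if and only if $x$ does. Hence $z$ and $x$ separate exactly the same pairs of hyperedges. Replacing, inside $C$, each selected deleted twin by its representative therefore leaves the family of separated pairs unchanged, and with it the entire partition of $E$ into classes. After this replacement we may assume $C\subseteq X'$, so that $C\cap X'=C$.

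It then remains to compare $H$ and $H'$ for a set $C\subseteq X'$. Since every hyperedge of $H'$ has the form $e\cap X'$ and $C\subseteq X'$, we have $(e\cap X')\cap C=e\cap C$, so the trace of $e\cap X'$ under $C$ in $H'$ coincides with the trace of $e$ under $C$ in $H$. Consequently the collection of distinct traces $\{e\cap C:e\in E\}$ is the same whether computed in $H$ or in $H'$, which is exactly the assertion that the induced equivalence classes agree.

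The step I expect to be the main obstacle is the twin-vertex reduction, and more precisely checking that the representative swap preserves the \emph{partition} and not merely the number of classes: I must verify that two hyperedges agree on $C$ if and only if they agree on the swapped set, which is precisely where the identity $z\in e\iff x\in e$ enters through Observation~\ref{obs:reform}. The remaining points---that deleting duplicate hyperedges does not alter which vertices are twins, and that restricting to $X'$ does not merge two distinct retained hyperedges---are routine consequences of the same twin property and can be handled as bookkeeping.
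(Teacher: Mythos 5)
Your argument is correct and is essentially the paper's own justification: the paper states this as an unproved observation, preceded only by the two remarks that twin hyperedges always lie in the same class and that selecting more than one vertex from a set of mutual twins brings no advantage, and these are exactly the two cases you elaborate (identical traces for twin hyperedges; identical separation behaviour for twin vertices via Observation~\ref{obs:reform}; no merging of retained hyperedges after restriction to $X'$). The one point worth recording is that you in fact prove the statement for the set obtained by \emph{replacing} each deleted twin of $C$ by its retained representative rather than for the literal intersection $C\cap X'$ --- the two differ when $C$ contains a deleted twin whose representative is not in $C$ (and then $C\cap X'$ can induce strictly fewer classes) --- so your reading is the one under which the observation is true and the one the paper actually uses to justify restricting attention to twin-free hypergraphs.
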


Therefore, since it is easy to detect twin hyperedges and vertices in an input hypergraph, in what follows, we will always restrict ourselves to hypergraphs without twins. We call such hypergraphs \emph{twin-free}.



\paragraph{Degree conditions.}

In a hypergraph $H$, the \emph{degree} of a vertex $x$ is the number of hyperedges it belongs to. The \emph{maximum degree} of $H$ is the maximum value of the degree of a vertex of $H$; we denote it by $\Delta(H)$.

The next theorem gives an upper bound on the number of neighborhood equivalence classes that can be induced when the degrees are bounded. 
\fullversion{We reproduce a short proof from the literature for completeness.}

\begin{theorem}[\cite{CGJMY16,DHHHLRS03,KCL98}]\label{thm:delta-LB}
Let $H=(X,E)$ be a hypergraph with maximum degree~$\Delta$ and let $C$ be a subset of $X$ of size~$k$. Then, $C$ cannot induce more than $\frac{k(\Delta+1)}{2}+1$ neighborhood equivalence classes.
\end{theorem}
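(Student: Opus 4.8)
The plan is to bound directly the number of distinct intersections (traces) $e\cap C$, viewing each equivalence class as one such trace, by a double-counting argument on the incidences between the vertices of $C$ and the \emph{non-empty} traces. The factor $\tfrac12$ over the naive estimate will come from treating singleton traces separately.

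First I would observe that the equivalence classes are exactly the distinct sets among $\{e\cap C : e\in E\}$, each of which is a subset of $C$. At most one of them is the empty set, so it suffices to bound the number $m$ of non-empty traces and then add $1$. I would list the non-empty traces as $T_1,\dots,T_m$ and split them into the $m_1$ singleton traces $\{x\}$ and the $m_{\geq 2}$ traces of size at least $2$, so that $m=m_1+m_{\geq 2}$.

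Next I would double-count the incidence pairs $(x,T_i)$ with $x\in T_i$. Counting by traces gives $\sum_i |T_i| \geq m_1 + 2m_{\geq 2}$, since each singleton contributes $1$ and each larger trace at least $2$. Counting by vertices, a fixed $x\in C$ lies in at most $\deg(x)\leq\Delta$ hyperedges, hence in at most $\Delta$ distinct traces (every trace containing $x$ comes from one of the hyperedges through $x$); summing over the $k$ vertices of $C$ bounds the number of pairs by $k\Delta$. This yields $m_1 + 2m_{\geq 2}\leq k\Delta$. Combined with the trivial bound $m_1\leq k$ (there are only $k$ possible singletons), I get $2m = 2m_1 + 2m_{\geq 2} = m_1 + (m_1 + 2m_{\geq 2}) \leq k + k\Delta = k(\Delta+1)$, so $m\leq \tfrac{k(\Delta+1)}{2}$. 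Adding back the possible empty trace gives the claimed bound $\tfrac{k(\Delta+1)}{2}+1$.

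The only subtle point, and the step that provides the improvement over the naive bound $k\Delta+1$, is the separate treatment of singleton classes. The naive argument, that each non-empty trace contains at least one vertex and each vertex lies in at most $\Delta$ traces, only gives $m\leq k\Delta$. The factor of $2$ is recovered precisely by noting that every non-singleton trace costs at least two units in the incidence count while the singletons are capped at $k$; balancing these two constraints is what produces the averaged bound.
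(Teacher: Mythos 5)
Your proof is correct and follows essentially the same route as the paper's: both split the nonempty classes into those meeting $C$ in exactly one vertex (at most $k$ of them) and those meeting it in at least two, and both use the incidence count $\sum|T_i|\leq k\Delta$ to bound the latter, yielding $i_1+i_2\leq i_1+\tfrac{k\Delta-i_1}{2}\leq\tfrac{k(\Delta+1)}{2}$. Your write-up just makes the double counting more explicit.
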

\fullversion
{
\begin{proof}
Denote by $i_1$ the number of equivalence classes
whose members contain a unique element of $C$, and $i_2$, the number of equivalence classes
whose members contain at least two elements of $C$. We have $i_1\leq k$ and $i_2\leq
\tfrac{k\Delta-i_1}{2}$ since each element of $C$ is contained in at most $\Delta$ edges. Hence, the number of classes is at most $i_1+i_2+1\leq
k+\tfrac{k\Delta}{2}-\tfrac{k}{2}=\tfrac{k(\Delta+1)}{2}+1$.
\end{proof}
}



\paragraph{The Sauer-Shelah lemma.}

The following theorem is known as the Sauer-Shelah Lemma~\cite{S72,Sh72} (it is also
credited to Perles in~\cite{Sh72} and a weaker form was stated by Vapnik and \v{C}ervonenkis~\cite{VC71}). It is a fundamental tool in the study of the VC dimension.

\begin{theorem}[Sauer-Shelah Lemma~\cite{S72,Sh72}]\label{thm:sauer}
Let $H=(X,E)$ be a hypergraph with more than $\sum_{i=0}^{d-1}\binom{|X|}{i}$ distinct hyperedges. Then, $H$ has VC dimension at least $d$.
\end{theorem}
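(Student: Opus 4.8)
The plan is to prove the contrapositive-style statement by showing that if a hypergraph $H=(X,E)$ has VC dimension at most $d-1$ (i.e., no set of size $d$ is shattered), then the number of distinct hyperedges is at most $\sum_{i=0}^{d-1}\binom{|X|}{i}$. I would argue by induction on $|X|$, and for each fixed $|X|$, either simultaneously on $d$ or by a direct shifting argument. The cleanest route is the classical induction on $|X|$: fix an element $x\in X$, remove it, and relate the number of distinct hyperedge-traces on $X$ to traces on $X\setminus\{x\}$.

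First I would set up the two auxiliary set systems. Working with the traces (distinct intersections) of $E$ on $X$, let $\mathcal{F}$ be this family of subsets of $X$, so $|\mathcal{F}|$ is the number of distinct hyperedges. Pick $x\in X$ and define $\mathcal{F}_1$ to be the projection of $\mathcal{F}$ onto $X\setminus\{x\}$ (each set with $x$ deleted, with duplicates merged), and $\mathcal{F}_2$ to be the family of sets $A\subseteq X\setminus\{x\}$ such that \emph{both} $A$ and $A\cup\{x\}$ appear in $\mathcal{F}$. The key counting identity is $|\mathcal{F}| = |\mathcal{F}_1| + |\mathcal{F}_2|$, since a pair of sets in $\mathcal{F}$ differing only in $x$ contributes one element to each of $\mathcal{F}_1$ and $\mathcal{F}_2$, while an unpaired set contributes only to $\mathcal{F}_1$.

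Next I would control the VC dimensions of the two smaller systems over the ground set $X\setminus\{x\}$ of size $|X|-1$. Clearly $\mathcal{F}_1$ has VC dimension at most $d-1$, since any set shattered by $\mathcal{F}_1$ is also shattered by $\mathcal{F}$. The crucial observation is that $\mathcal{F}_2$ has VC dimension at most $d-2$: if $\mathcal{F}_2$ shattered a set $S\subseteq X\setminus\{x\}$ of size $d-1$, then by the definition of $\mathcal{F}_2$ the original family $\mathcal{F}$ would shatter $S\cup\{x\}$, a set of size $d$, contradicting the hypothesis. Applying the induction hypothesis to both gives
\[
|\mathcal{F}| \leq \sum_{i=0}^{d-1}\binom{|X|-1}{i} + \sum_{i=0}^{d-2}\binom{|X|-1}{i},
\]
and then Pascal's identity $\binom{|X|-1}{i}+\binom{|X|-1}{i-1}=\binom{|X|}{i}$ collapses this to $\sum_{i=0}^{d-1}\binom{|X|}{i}$, completing the induction after checking the base cases $|X|=0$ or $d=0$.

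The main obstacle, and the only genuinely subtle point, is establishing the VC-dimension bound for $\mathcal{F}_2$: one must see clearly why shattering a set in $\mathcal{F}_2$ forces $\mathcal{F}$ to shatter a set \emph{one element larger} by adjoining $x$. Everything else is bookkeeping with binomial coefficients. I would therefore spend the bulk of the argument justifying the identity $|\mathcal{F}|=|\mathcal{F}_1|+|\mathcal{F}_2|$ and the dimension drop for $\mathcal{F}_2$, and treat the Pascal's-identity step as routine.
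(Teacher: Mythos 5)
Your argument is the classical induction proof of the Sauer--Shelah Lemma and it is correct: the decomposition $|\mathcal{F}|=|\mathcal{F}_1|+|\mathcal{F}_2|$, the observation that $\mathcal{F}_1$ inherits VC dimension at most $d-1$ while $\mathcal{F}_2$ drops to at most $d-2$ (because shattering $S$ by $\mathcal{F}_2$ forces $\mathcal{F}$ to shatter $S\cup\{x\}$), and the Pascal-identity collapse are exactly the standard route, and your base cases close the induction. Note, however, that the paper itself gives no proof of this statement: it is quoted as a known classical result with citations to Sauer and Shelah, and the surrounding text only remarks that the original proofs were non-constructive and that constructive (randomized polynomial-time) versions exist, which the paper later relies on to actually extract a shattered set. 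So there is no in-paper argument to compare yours against; your proof establishes the existence statement as quoted, but if you wanted to support the later algorithmic use of the lemma you would need the constructive variants cited there rather than this purely existential induction.
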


Theorem~\ref{thm:sauer} is known to be tight. Indeed, the system that consists of considering all subsets of $\{1,\ldots,n\}$ of cardinality at most $d-1$ has VC dimension equal to $d-1$. 
Though the original proofs of Theorem~\ref{thm:sauer} were non-constructive, Ajtai~\cite{A98} gave a constructive proof that yields a (randomized) polynomial-time algorithm, and an easier proof of this type can be found in Miccianio~\cite{M01}.

The following direct corollary of Theorem~\ref{thm:sauer} is observed for example in~\cite{BLLPT15}.

\begin{corollary}\label{cor:VCdim}
Let $H=(X,E)$ be a hypergraph with VC dimension at most $d$. Then, for any subset $X'\subseteq X$, there are at most $\sum_{i=0}^{d}\binom{|X'|}{i}\leq |X'|^d+1$ equivalence classes induced by $X'$.
\end{corollary}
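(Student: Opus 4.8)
The plan is to recognize that counting the equivalence classes induced by $X'$ is the same as counting the distinct hyperedges of the \emph{restriction} of $S$ to $X'$, and then to apply the Sauer-Shelah Lemma (Theorem~\ref{thm:sauer}) to this restriction. Concretely, I would introduce the hypergraph $H'=(X',E')$ whose hyperedge set $E'=\{e\cap X' : e\in E\}$ consists of the distinct traces of the hyperedges of $E$ on $X'$. By the very definition of a neighborhood equivalence class (each class is identified with the common intersection $e\cap X'$ of its members with $X'$), the number of equivalence classes induced by $X'$ in $S$ equals $|E'|$, the number of distinct hyperedges of $H'$. Thus it suffices to bound $|E'|$.

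The crucial intermediate step is to show that $H'$ has VC dimension at most $d$, i.e.\ that restricting the ground set cannot increase the VC dimension. I would prove this by verifying that any set $Y\subseteq X'$ shattered by $H'$ is already shattered by $S$. Indeed, for $Y\subseteq X'$ and any hyperedge $e\in E$ one has $(e\cap X')\cap Y=e\cap Y$, so the family of traces on $Y$ coming from $E'$ coincides with the family coming from $E$; hence if all $2^{|Y|}$ subsets of $Y$ occur among the sets $e\cap X'$, they already occur among the sets $e\cap Y$. Since $S$ has VC dimension at most $d$, no $Y$ of size exceeding $d$ can be shattered in $S$, and therefore none can be shattered in $H'$ either.

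With this in hand, I would apply the contrapositive of Theorem~\ref{thm:sauer} to the hypergraph $H'$, taking the parameter of the lemma to be $d+1$: if $H'$ had strictly more than $\sum_{i=0}^{d}\binom{|X'|}{i}$ distinct hyperedges, it would have VC dimension at least $d+1$, contradicting the previous step. Hence $|E'|\le \sum_{i=0}^{d}\binom{|X'|}{i}$, which is exactly the claimed bound on the number of equivalence classes. Finally, the inequality $\sum_{i=0}^{d}\binom{|X'|}{i}\le |X'|^{d}+1$ is a routine binomial estimate (the $i=0$ term accounts for the summand $1$, while $\sum_{i=1}^{d}\binom{|X'|}{i}\le |X'|^{d}$ follows from a direct injection of the nonempty subsets of size at most $d$ into the $d$-tuples over $X'$), which I would simply record.

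As for difficulty, there is no genuine obstacle here: this is a short deduction from the Sauer-Shelah Lemma. The only points demanding any care are the monotonicity of the VC dimension under restriction of the ground set (the shattering argument in the second paragraph) and the off-by-one bookkeeping in the index of the binomial sum when invoking Theorem~\ref{thm:sauer} with parameter $d+1$ rather than $d$.
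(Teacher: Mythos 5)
Your proof is correct and is exactly the argument the paper intends: the corollary is stated there without proof as a ``direct corollary'' of Theorem~\ref{thm:sauer}, and your derivation --- pass to the trace hypergraph on $X'$, observe that restriction cannot increase the VC dimension, and invoke the Sauer--Shelah Lemma with parameter $d+1$ --- is the standard route, with the binomial estimate handled correctly.
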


\confversion{\paragraph{Approximation.}}
\fullversion{\paragraph{Definitions and basic results about computational complexity.}}

{
We now recall some standard definitions in parameterized complexity, but we refer to the book~\cite{DF13} for details.

A \emph{parameterized problem} is a decision problem together with a \emph{parameter}, that is, an integer depending on the instance. 

The class XP denotes the class of parameterized problems that can be solved in time $|I|^{f(k)}$, where $I$ is the instance, $k$ is the parameter, and $f$ is computable. The class FPT (for \emph{fixed-parameter tractable}) denotes the class of parameterized problems that can be solved in time $f(k)\cdot|I|^c$ for an instance $I$ of size $|I|$ with parameter $k$, where $f$ is a computable function and $c$ is a constant. 

An \emph{FPT reduction} between two parameterized problems $A$ and $B$ is a function mapping an instance $I$ of $A$ with parameter $k$ to an instance $f(I)$ of $B$ with parameter $g(k)$, where $f(I,k)$ and $g(k)$ are computable in FPT time with respect to parameter $k$, and where $I$ is a YES-instance of $A$ if and only if $f(I)$ is a YES-instance of $B$.
The \textsc{Weighted Circuit Satisfiability} (WCS) problem parameterized by $k$ takes as input a boolean circuit and an integer $k$, and decides if there is a satisfying assignment for this circuit with exactly $k$ input vertices of the circuit set to true. 
 The class W[$t$], $t\geq 1$ contains problems FPT-reducible to WCS restricted to circuits with weft at most $t$, where the weft of a boolean circuit is the number of gates with unbounded in-degree in any path from the input vertices to the output vertex.
Unless W[$t$]=FPT, any W[$t$]-hard problem is not fixed-parameter tractable. A problem is W[$t$]-hard if any problem in W[$t$] is FPT-reducible to it, and W[$t$]-complete if it is moreover in W[$t$]. 
There is the following hierarchy: FPT $\subseteq$ W[$1$] $\subseteq$ W[$2$] $\subseteq \ldots \subseteq$ XP.
}

We now recall some definitions about approximation complexity. See the book~\cite{ausiello} for details.

Given an optimization problem $A$ in NPO and an instance $I$ of $A$, we denote  by $opt_A(I)$ the optimum value of $I$  (or $opt(I)$ if there is no ambiguity), and by $val(S)$ the value of a feasible solution $S$ of instance $I$. The {\em
performance ratio\/} of $S$ (or {\it approximation factor}) is
$r(I,S)=\max\left\{\frac{val(S)}{opt_A(I)},
\frac{opt_A(I)}{val(S)}\right\}.$ 
For a function $f$, an algorithm  is an {\it
$f(|I|)$-approximation algorithm\/}, if for every instance $I$ of the problem,
it returns a solution $S$ such that $r(I,S) \leq f(|I|).$

The class $\apx$ contains all optimization problems that admit a polynomial-time $c$-approximation algorithm for some fixed constant $c$. A \emph{polynomial-time approximation scheme} (PTAS for short) for an optimization problem is an algorithm that, given any fixed constant $\epsilon>0$, returns in polynomial time (in terms of the instance and for fixed $\epsilon$) a solution that is a factor of $1+\epsilon$ away from the optimum. If the exponent of the polynomial does not depend on $\epsilon$, such an algorithm is called an \emph{efficient PTAS} (EPTAS for short). 


\begin{definition}[$L$-reduction \cite{PY91}] \label{def:l-reduc}
Let $A$ and $B$ be two optimization
problems. Then $A$ is said to be {\it $L$-reducible\/} to $B$ if
there are two constants $\alpha, \beta >0$ and two polynomial time computable functions $f$, $g$ such that:
(i) $f$ maps an instance $I$ of $A$ into an
instance $I'$ of $B$ such that $opt_B(I')\leq \alpha \cdot opt_A(I)$,
(ii) $g$ maps each solution $S'$ of
$I'$ into a solution $S$ of $I$ such that $|val(S)-opt_A(I)|\leq
\beta \cdot|val(S')-opt_B(I')|$.
\end{definition}

$L$-reductions are useful in order to apply the following theorem.

\begin{theorem}[\cite{PY91}]\label{thm:L-reduc}
Let $A$ and $B$ be two optimization problems. If $A$ is $L$-reducible to $B$ and $B$ has a PTAS, then $A$ has a PTAS.
\end{theorem}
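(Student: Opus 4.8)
The plan is to show that the assumed PTAS for $B$ can be transported back to $A$ through the two maps $f$ and $g$ of the $L$-reduction. Fix an arbitrary constant $\epsilon>0$; I want a polynomial-time algorithm for $A$ whose output $S$ satisfies $r(I,S)\leq 1+\epsilon$. First I would, given an instance $I$ of $A$, compute the instance $I'=f(I)$ of $B$ in polynomial time. Then I would invoke the PTAS for $B$ on $I'$ with an error parameter $\epsilon'$ chosen below, obtaining in polynomial time a solution $S'$ of $I'$ with $|val(S')-opt_B(I')|\leq \epsilon'\cdot opt_B(I')$. Finally I would apply $g$ to produce the solution $S=g(S')$ of $I$.

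The heart of the argument is the propagation of the relative error. Chaining property (ii) of Definition~\ref{def:l-reduc} with the PTAS guarantee gives
\[
|val(S)-opt_A(I)|\leq \beta\cdot|val(S')-opt_B(I')|\leq \beta\epsilon'\cdot opt_B(I'),
\]
and then property (i), namely $opt_B(I')\leq \alpha\cdot opt_A(I)$, yields
\[
|val(S)-opt_A(I)|\leq \alpha\beta\epsilon'\cdot opt_A(I).
\]
Since $\alpha$ and $\beta$ are fixed positive constants coming from the reduction, I would set $\epsilon'=\epsilon/(\alpha\beta)$, which is itself a constant; the displayed bound then becomes $|val(S)-opt_A(I)|\leq \epsilon\cdot opt_A(I)$, so $val(S)$ lies within the multiplicative window $1\pm\epsilon$ around the optimum.

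It then remains to convert this two-sided relative bound into a statement about the performance ratio $r(I,S)=\max\{val(S)/opt_A(I),\,opt_A(I)/val(S)\}$, and this is the only point requiring care. If $A$ is a minimization problem, $val(S)\leq(1+\epsilon)opt_A(I)$ gives $r(I,S)\leq 1+\epsilon$ immediately; if $A$ is a maximization problem, $val(S)\geq(1-\epsilon)opt_A(I)$ gives $r(I,S)\leq 1/(1-\epsilon)\leq 1+2\epsilon$ for $\epsilon\leq 1/2$. In both cases the ratio can be forced below any prescribed threshold by taking $\epsilon'$ (and hence $\epsilon$) small enough, and the total running time is polynomial because $f$, $g$ run in polynomial time and the PTAS for $B$ runs in time polynomial in $|I'|$ for the fixed constant $\epsilon'$. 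There is no genuine combinatorial obstacle here; the main thing to get right is this last bookkeeping, in particular keeping the minimization and maximization directions of the $\max$ separate and verifying that fixing $\epsilon'$ as a constant preserves polynomiality.
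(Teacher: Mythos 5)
The paper states this theorem as an imported result from Papadimitriou and Yannakakis~\cite{PY91} and gives no proof of its own, so there is nothing internal to compare against; your argument is the standard one from that reference and it is correct. The error-propagation chain $|val(S)-opt_A(I)|\leq\beta\epsilon'\,opt_B(I')\leq\alpha\beta\epsilon'\,opt_A(I)$ with $\epsilon'=\epsilon/(\alpha\beta)$, together with your care in converting the two-sided additive bound into the performance ratio separately for minimization and maximization (and in checking that a $(1+\epsilon')$-approximate solution of $B$ indeed satisfies $|val(S')-opt_B(I')|\leq\epsilon'\,opt_B(I')$ in both directions), is exactly what is needed.
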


\confversion{

\section{Positive approximation results for \MaxNEQCL}\label{sec:pos-approx}

}

\fullversion
{
\section{Parameterized complexity of \NEQCLkl}

In this section we explore the parameterized complexity of decision problem \NEQCLkl.

\subsection{Parameterization by $k$ and $\ell$}\label{sec:nat-param}
}

We start with a greedy polynomial-time procedure that always returns (if it exists), a set $|X'|$ that induces at least $|X'|+1$ equivalence classes.

\begin{lemma}\label{lemm:useBondy}
Let $H=(X,E)$ be a twin-free hypergraph and let $k\leq |X|-1$ be an
integer. One can construct, in time $O(k(|X|+|E|))$, a set $C\subseteq X$
of size~$k$ that produces at least~$\min\{|E|,k+1\}$ neighborhood equivalence classes.
\end{lemma}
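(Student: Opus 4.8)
The plan is to build $C$ greedily, one vertex at a time, so that each new vertex strictly increases the number of induced equivalence classes, until either $|C|=k$ or all $|E|$ hyperedges are pairwise separated. I would start from $C=\emptyset$, which induces exactly one class (every hyperedge meets $\emptyset$ trivially), and maintain the current partition of $E$ into neighborhood equivalence classes together with, for each vertex, the list of hyperedges containing it.

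Correctness rests on a single \emph{increment step}: if the current set $C$ induces fewer than $|E|$ classes, then some class contains two distinct hyperedges $e,e'$ with $e\cap C=e'\cap C$. Since $H$ is twin-free, $e\neq e'$ forces $e\triangle e'\neq\emptyset$, and any vertex $x\in e\triangle e'$ satisfies $x\notin C$ (otherwise $x$ would already separate $e$ and $e'$, contradicting $e\cap C=e'\cap C$). Adding $x$ to $C$ separates $e$ from $e'$, and by Observation~\ref{obs:add-remove} it cannot merge any pair that was already separated; hence the class count strictly increases. Iterating, after adding $j$ vertices the set $C$ induces at least $j+1$ classes, so after $k$ additions we reach at least $k+1$ classes, unless at some earlier point the count already equals $|E|$. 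In that case I would pad $C$ up to size $k$ with arbitrary vertices of $X\setminus C$, which is possible since $k\le |X|-1$; by Observation~\ref{obs:add-remove} the count stays at $|E|$. Either way $C$ induces at least $\min\{|E|,k+1\}$ classes.

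For the running time I would refine the partition of $E$ in place rather than recompute it. In each of the $k$ iterations one (i) locates a non-singleton class in $O(|E|)$ time, (ii) extracts two of its hyperedges and scans them to find a separating vertex $x$ in $O(|X|)$ time, and (iii) refines the partition by splitting every class according to membership of $x$, which costs $O(|E|)$ using the precomputed incidence list of $x$. This yields $O(|X|+|E|)$ per iteration and $O(k(|X|+|E|))$ overall.

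The main thing to get right is the bookkeeping in step (iii): the partition must be refined incrementally, not rebuilt from scratch, to stay within the claimed bound, and one must justify that the separating vertex always exists and always lies outside $C$ — which is exactly where twin-freeness (no two equal hyperedges) is used. Conceptually this greedy argument is the algorithmic counterpart of Bondy's theorem, which guarantees that $\ell$ distinct hyperedges can be pairwise separated using $\ell-1$ suitably chosen vertices.
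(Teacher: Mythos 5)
Your proof is correct and follows essentially the same greedy strategy as the paper: add one vertex at a time, using twin-freeness to guarantee that any non-singleton class contains two hyperedges separated by some vertex outside the current set, so each step strictly increases the class count until $|E|$ classes are reached. The only (harmless) difference is that the paper selects at each step the vertex maximizing the number of classes and derives the increment by contradiction, whereas you directly pick a separating vertex for a witness pair --- which, if anything, makes the $O(k(|X|+|E|))$ running time easier to justify.
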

\begin{proof}
We produce $C$ in an inductive way. First, let $C_1=\{x_1\}$ for an arbitrary vertex $x_1$ of $X$ for which there exists at least one hyperedge of $E$ with $x_1\notin E$ (if such hyperedge does not exist, then all edges are twin edges; since $H$ is twin-free, $|E|\leq 1$ and we are done). Then, for each $i$ with $2\leq i\leq k$, we build $C_{i}$ from $C_{i-1}$ as follows: select vertex $x_i$ as a vertex in $X\setminus C_{i-1}$ such that $C_{i-1}\cup\{x_i\}$ maximizes the number of equivalence classes.

We claim that either we already have at least $|E|$ equivalence classes, or $C_{i}$ induces at least one more equivalence class than $C_{i-1}$. Assume for a contradiction that we have less than $|E|$ equivalence classes, but $C_{i}$ has the same number of equivalence classes as $C_{i-1}$. Since we have less than $|E|$ classes, there is an equivalence class consisting of at least two edges, say $e_1$ and $e_2$. But then, since $H$ is twin-free, there is a vertex $x$ that belongs to exactly one of $e_1$ and $e_2$. But $C_{i-1}\cup\{x\}$ would have more equivalence classes than $C_{i}$, a contradiction since $C_{i}$ was maximizing the number of equivalence classes.

Hence, setting $C=C_k$ finishes the proof.
\end{proof}

{

By Lemma~\ref{lemm:useBondy}, we may always assume that $\ell\geq k+2$ since otherwise, \NEQCLkl is polynomial-time solvable. Moreover, we can always assume that $\ell\leq 2^k$ (otherwise we have a NO-instance). This shows that both parameters $k$ and $\ell$ are linked to each other, and then \NEQCLkl is in FPT parameterized by $k$ if and only if it is in FPT parameterized by $\ell$ or $k+\ell$. 
However, when studying the existence of a polynomial kernel or sharp algorithm time-complexity lower bounds, one may consider both parameters $k$ and $\ell$. 
 
\begin{theorem}
\NEQCLkl is solvable in time $O((|X|+|E|)^{\min\{k,\ell\}+O(1)})$, and thus \NEQCLkl parameterized by $k$ (or $\ell$) belongs to the class XP.
\end{theorem}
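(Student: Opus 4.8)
The plan is to establish the bound by giving two complementary exhaustive-search algorithms—one whose running time is governed by $k$ and one whose running time is governed by $\ell$—and then to run whichever is cheaper for the given instance. Throughout I assume, as justified in the preliminaries, that $H$ is twin-free, and I reject immediately if $k>|X|$ (no set of size $k$ exists).

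First I would handle the parameter $k$. Enumerate all $\binom{|X|}{k}$ sets $C\subseteq X$ of size exactly $k$. For a fixed $C$, the number of induced equivalence classes is computed directly: for each hyperedge $e\in E$ form the intersection $e\cap C$ (a subset of $C$), and count how many distinct such intersections arise, e.g.\ by sorting or hashing the $|E|$ resulting subsets. This evaluation takes time polynomial in $|X|+|E|$, so scanning all candidate sets and accepting iff one of them yields at least $\ell$ classes runs in time $O((|X|+|E|)^{k+O(1)})$.

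For the parameter $\ell$ (relevant when $\ell\le k$), the key structural observation is that a solution can always be witnessed by a small set. Concretely, if some $C$ of size $k$ induces at least $\ell$ classes, then there is a subset $C'\subseteq C$ with $|C'|\le \ell-1$ that already induces at least $\ell$ classes. This is exactly the greedy/incremental argument behind Lemma~\ref{lemm:useBondy}: starting from the empty set (which induces a single class) and repeatedly adding a vertex of $C$ that splits some current class—such a vertex must exist as long as the current count is below $\ell$, since $C$ itself reaches at least $\ell$ classes—each addition raises the class count by at least one, so $\ell-1$ additions suffice. Combining this with the monotonicity of Observation~\ref{obs:add-remove} (adding vertices never decreases the number of classes) gives the equivalence: under $k\le|X|$, the instance $(H,k,\ell)$ is a YES-instance if and only if some set $C'$ with $|C'|\le\min\{k,\ell-1\}$ induces at least $\ell$ classes, because any such $C'$ can be padded up to size exactly $k$ without losing classes. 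It therefore suffices to enumerate all subsets of $X$ of size at most $\ell-1$—there are $O(|X|^{\ell-1})$ of them—and to test each with the same polynomial-time class-counting routine, giving running time $O((|X|+|E|)^{\ell+O(1)})$.

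Running the first algorithm when $k\le\ell$ and the second otherwise yields the claimed $O((|X|+|E|)^{\min\{k,\ell\}+O(1)})$ bound, and membership in XP (for parameter $k$, and equivalently for $\ell$ by the linkage between the two parameters noted just before the statement) follows immediately. I expect the only genuinely non-routine point to be the witness-size claim; the remaining parts—the per-candidate class count and the padding argument via Observation~\ref{obs:add-remove}—are routine bookkeeping, though some care is needed with the boundary cases ($k>|X|$, and checking $\min\{k,\ell-1\}\le k\le|X|$ before padding).
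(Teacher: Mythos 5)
Your proof is correct, and the $k$-branch (enumerate all $k$-subsets and count intersections) is exactly the paper's. The $\ell$-branch, however, takes a genuinely different algorithmic route built on the same combinatorial fact. You prove a witness-compression claim---any solution $C$ contains a subset of size at most $\ell-1$ already inducing $\ell$ classes, via the ``some class can always be split'' argument---and then brute-force over all subsets of size at most $\ell-1$, paying $O((|X|+|E|)^{\ell+O(1)})$. The paper instead observes that Lemma~\ref{lemm:useBondy}, invoked with parameter $\ell-1$, \emph{constructs} such a set greedily in time $O(\ell(|X|+|E|))$: twin-freeness guarantees that as long as fewer than $|E|$ classes exist, some vertex of $X$ (not merely of $C$) splits a class, so the case $k\geq\ell$ is outright polynomial-time solvable (the set produced has $\min\{|E|,\ell\}$ classes, and if $|E|<\ell$ the instance is trivially NO). Your enumeration is thus unnecessary overhead---you rediscover the lemma's inductive step but restrict the splitting vertex to lie in a hypothetical solution $C$, which forces you to search for $C'$ rather than build it---though it stays within the claimed bound, so the theorem as stated is fully proved. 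Your handling of the padding via Observation~\ref{obs:add-remove} and of the boundary cases ($k>|X|$, exact size $k$) is careful and matches what the paper leaves implicit.
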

\begin{proof}
First, observe that we can compute, for every
$k$-subset $Y$ of $X$, the set of neighborhood equivalence classes produced by $Y$
(simply compute, for each hyperedge $e$ in $E$, the set $e\cap Y$). This
gives an $O((|X|+|E|)^{k+O(1)})$-time algorithm. Therefore, if $k<\ell$, we are done.
Assume then that $k\geq\ell$. Then, we can apply Lemma~\ref{lemm:useBondy} with $k=\ell-1$, and obtain a solution of size at most $\ell-1\leq k$ creating at least $\ell$ classes, in time $O(k(|X|+|E|))$.
\end{proof}

We refer to the book by Flum and Grohe~\cite{FG06} and to Courcelle~\cite{Courcelle} for the definitions of the logics $\Sigma_1$ and MSOL, respectively.

\begin{proposition}\label{prop:logic}
\NEQCLkl can be expressed by a $\Sigma_1$ (and thus MSOL) formula with a number of variables bounded by a function of $k+\ell$.
\end{proposition}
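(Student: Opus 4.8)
The plan is to view the hypergraph $H=(X,E)$ as a relational structure and then write an existential first-order sentence that directly encodes the reformulation of Observation~\ref{obs:reform}. First I would fix the standard incidence vocabulary: the universe is $X\cup E$, there are two unary predicates $\mathrm{vert}(\cdot)$ and $\mathrm{edge}(\cdot)$ distinguishing vertices from hyperedges, and one binary predicate $\mathrm{inc}(x,e)$ that holds exactly when $x\in e$ in $H$. Every hypergraph is described by such a structure in polynomial time.

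By Observation~\ref{obs:reform}, $(H,k,\ell)$ is a YES-instance precisely when there exist $k$ distinct vertices $c_1,\dots,c_k$ and $\ell$ hyperedges $f_1,\dots,f_\ell$ such that every pair $f_a,f_b$ with $a\neq b$ is distinguished by some $c_i$. The crucial point, which is also the main obstacle in keeping the \emph{number of variables} (and not merely the quantifier prefix) bounded, is that the distinguishing vertex of each pair may be \emph{required} to be one of the $c_i$ already quantified; hence the inner ``there exists a distinguishing vertex'' becomes a \emph{finite disjunction} over $i\in\{1,\dots,k\}$ and introduces no new quantifier.

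Concretely, I would write the sentence $\exists c_1\cdots\exists c_k\,\exists f_1\cdots\exists f_\ell\;\Big(\varphi_{\mathrm{vert}}\wedge\varphi_{\mathrm{dist}}\wedge\varphi_{\mathrm{sep}}\Big)$, where $\varphi_{\mathrm{vert}}=\bigwedge_{i}\mathrm{vert}(c_i)\wedge\bigwedge_{a}\mathrm{edge}(f_a)$ types the variables, $\varphi_{\mathrm{dist}}=\bigwedge_{i<j}c_i\neq c_j$ forces the $c_i$ to form a set of size exactly $k$, and $\varphi_{\mathrm{sep}}=\bigwedge_{1\leq a<b\leq\ell}\ \bigvee_{i=1}^{k}\big((\mathrm{inc}(c_i,f_a)\wedge\neg\,\mathrm{inc}(c_i,f_b))\vee(\neg\,\mathrm{inc}(c_i,f_a)\wedge\mathrm{inc}(c_i,f_b))\big)$ expresses that each chosen pair of hyperedges is separated by a chosen vertex. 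Observe that $\varphi_{\mathrm{sep}}$ already entails that the $f_a$ are pairwise distinct, so they witness $\ell$ distinct equivalence classes. This sentence is purely existential, hence lies in $\Sigma_1$; and since first-order logic is a fragment of MSOL, it is also an MSOL sentence. It uses exactly $k+\ell$ quantified variables, i.e. a number bounded by a function of $k+\ell$, as required.

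The remaining point that needs care is correctness of the ``size exactly $k$'' requirement, and here I would invoke Observation~\ref{obs:add-remove}: if some set of fewer than $k$ vertices already induces $\ell$ classes, it extends to a set of size $k$ inducing at least as many, so demanding $k$ pairwise-distinct witnesses loses no generality whenever $k\leq|X|$. The rest of the verification -- that the quantifier-free matrix has size polynomial in $k+\ell$, and that both directions of Observation~\ref{obs:reform} translate faithfully into satisfaction of $\varphi_{\mathrm{sep}}$ -- is routine.
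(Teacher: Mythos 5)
Your proof is correct and follows essentially the same route as the paper: a direct $\Sigma_1$ encoding of Observation~\ref{obs:reform} with $k$ existential vertex variables and $\ell$ existential hyperedge variables, where each pair of chosen hyperedges must be separated by one of the chosen vertices. The only (cosmetic) difference is that the paper introduces an extra existential separator variable $s_i^j$ for each pair, constrained to equal one of the $x_q$, whereas you inline that constraint as a finite disjunction over the $k$ solution variables, which slightly reduces the variable count but changes nothing essential.
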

\begin{proof}
In the spirit of the $\Sigma_1$ formula for \VCDIM given in the book~\cite[Theorem 6.5]{FG06}, we define the following $\Sigma_1$ formula for \NEQCLkl that corresponds to the reformulation of the problem in Observation~\ref{obs:reform}.

\begin{align*}
&\exists x_1,\dots, x_k \in X, 
y_1,\dots,y_{\ell} \in E, 
s_1^2, s_1^3, \dots, s_{\ell-1}^{\ell} \in X,\\
&\bigwedge_{\substack{i,j=0\\i<j}}^{\binom{\ell}{2}}
\left(
\left(
\bigvee_{q=1}^{k} (s_i^j = x_q) 
\right)
\wedge
\left(
(s_i^j \in y_i \wedge s_i^j \notin y_j)
\vee (s_i^j \notin y_i \wedge s_i^j \in y_j)
\right)
\right)
\end{align*}

The variables $x_i$ represent the solution vertices, and each variable $y_i$ is a hyperedge that is representative of a distinct equivalence class. For a pair of distinct hyperedges $y_i$ and $y_j$, the variable $s_i^j$ represents a solution vertex that belongs to exactly one of the hyperedges $y_i$ and $y_j$.~\end{proof}

By viewing \NEQCLkl restricted to neighborhood hypergraphs of graphs as a graph problem, we obtain the following consequence of Proposition~\ref{prop:logic} and Courcelle's theorem~\cite{courcelleCW} (see~\cite{courcelleCW} for a definition of cliquewidth).
Although we only need the result for bounded treewidth graphs in the rest of the article, we give the corollary with the stronger result which holds for cliquewidth.

\begin{corollary}\label{courcelle}
\NEQCLkl can be solved in time $f(w)n$ on neighborhood hypergraphs of graphs of order $n$ and cliquewidth at most $w$ for some function $f$.
\end{corollary}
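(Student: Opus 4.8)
The plan is to combine the logical characterization of Proposition~\ref{prop:logic} with the cliquewidth version of Courcelle's theorem from \cite{courcelleCW}. Since we work with neighborhood hypergraphs of graphs, I first adopt the graph viewpoint of the paragraph ``Graphs versus hypergraphs'': a solution is a set of $k$ vertices of $G$, and the hyperedges of the neighborhood hypergraph $H$ are exactly the closed neighborhoods $N[v]$, $v\in V(G)$. The goal is to rewrite the $\Sigma_1$ formula of Proposition~\ref{prop:logic}, which is stated over the hypergraph, as an equivalent formula over $G$ that uses only equality and the adjacency relation, and then to invoke linear-time MSOL model checking on graphs of bounded cliquewidth.

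The first step is to translate hyperedge-membership into adjacency. In the formula of Proposition~\ref{prop:logic} the quantifiers range over solution vertices $x_i$, hyperedges $y_i$, and separators $s_i^j$, and the only non-trivial atoms have the form ``$z\in y$''. In the neighborhood hypergraph the hyperedge represented by $y_i$ is the closed neighborhood $N[v_i]$ of some vertex $v_i$, so I replace each hyperedge variable $y_i$ by an ordinary vertex variable ranging over $V(G)$ and replace each atom ``$z\in y_i$'' by the first-order formula $(z=y_i)\vee \mathrm{adj}(z,y_i)$, where $\mathrm{adj}$ is the adjacency predicate of $G$. By Observation~\ref{obs:reform}, the resulting formula $\phi$ holds in $G$ if and only if $(H,k,\ell)$ is a YES-instance. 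Crucially, $\phi$ uses only vertex quantifiers, equality and adjacency, hence it is a first-order (and a fortiori MSO$_1$) formula over $G$; by Proposition~\ref{prop:logic} its number of variables, and therefore its size up to a factor polynomial in $k+\ell$, is bounded by a function of $k+\ell$.

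The second step is to apply the cliquewidth version of Courcelle's theorem \cite{courcelleCW}: for a fixed MSO$_1$ formula $\phi$ and a fixed integer $w$, one can decide $G\models\phi$ in time $f(w,|\phi|)\cdot n$ on any graph $G$ of cliquewidth at most $w$ given together with a $w$-expression (such an expression, or one of width bounded by a function of $w$, being assumed given or computed within the allotted time). Since $|\phi|$ is bounded by a function of $k+\ell$, this yields the claimed running time $f(w)\cdot n$; here the constant absorbed into $f$ also depends on the formula, that is, on $k$ and $\ell$, which for non-trivial instances satisfy $k+2\leq \ell\leq 2^{k}$ by Lemma~\ref{lemm:useBondy}.

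The only genuinely delicate point is the faithful translation of the hypergraph semantics into the adjacency structure of $G$: I must ensure that quantifying, for each equivalence class, a representative \emph{vertex} $v_i$ in place of an abstract hyperedge does not alter the meaning, which is precisely what the closed-neighborhood description of the hyperedges together with Observation~\ref{obs:reform} guarantee. A secondary, purely bookkeeping point is that the number of quantified variables grows with $k+\ell$, so the formula is ``fixed'' only once $k$ and $\ell$ are fixed, and this dependence must be tracked inside $f$; no idea beyond this accounting is needed.
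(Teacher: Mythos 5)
Your proposal is correct and follows exactly the route the paper intends: the corollary is derived by reading the $\Sigma_1$ formula of Proposition~\ref{prop:logic} over the graph (replacing hyperedge membership $z\in y_i$ by $z=y_i\vee \mathrm{adj}(z,y_i)$, since hyperedges are closed neighborhoods) and then invoking the Courcelle--Makowsky--Rotics theorem for bounded cliquewidth. Your remark that the hidden constant also depends on $k+\ell$ through the formula size is a fair and accurate reading of the (implicit) dependence in the paper's statement.
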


We now prove a hardness result, by slightly modifying the reduction of~\cite{DEF93}.

\begin{theorem}\label{prop:vchard}
\NEQCLkl parameterized by $k$ belongs to W[1]. Moreover, \VCDIM (and thus \NEQCLkl) is W[1]-hard, even for neighborhood hypergraphs of graphs that are either bipartite, split or co-bipartite.
\end{theorem}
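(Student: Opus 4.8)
```latex
\paragraph{Proof plan for Theorem~\ref{prop:vchard}.}

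The statement has two parts, and the plan is to treat them separately. For the \textbf{membership in W[1]}, I would exhibit an FPT reduction from \NEQCLkl parameterized by $k$ to \textsc{Weighted Circuit Satisfiability} restricted to circuits of weft~$1$, since W[1] is exactly the class of problems FPT-reducible to weft-$1$ WCS. The cleanest way is to use the $\Sigma_1$ formula from Proposition~\ref{prop:logic}: a $\Sigma_1$ sentence with a bounded number of variables translates directly into a bounded-weft circuit whose true inputs encode the choice of the $k$ solution vertices. Concretely, the input gates correspond to the vertices of $X$ (so asking for exactly $k$ true inputs selects $C$), and the remaining existential choices (the representative hyperedges $y_1,\dots,y_\ell$ and the separators $s_i^j$) are guessed inside the circuit via bounded blocks of AND/OR gates; because the quantifier structure is purely existential with each clause a bounded Boolean combination, the resulting circuit has weft~$1$. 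This yields membership in W[1].

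For the \textbf{W[1]-hardness}, the target is \VCDIM (which is the special case $\ell=2^k$), so it suffices to show \VCDIM is W[1]-hard on each of the three graph classes, and the generalization to \NEQCLkl is free. The plan is to start from the known W[1]-hardness of \VCDIM parameterized by $k$ on general hypergraphs due to Downey, Evans and Fellows~\cite{DEF93}, and adapt their reduction so that the constructed \emph{neighborhood} hypergraph arises from a graph of the prescribed type. The standard device is: given a hypergraph $H=(X,E)$, build a bipartite incidence-style graph $G$ with one side representing (a copy of) $X$ and the other representing~$E$, with adjacencies reflecting incidence, and then argue that a shattered set in $H$ corresponds to a vertex set of the same size in the neighborhood hypergraph of $G$ realizing all $2^k$ traces, and conversely. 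This naturally produces a \emph{bipartite} graph. To obtain the \emph{split} and \emph{co-bipartite} variants, I would turn one (for split) or both (for co-bipartite) of the bipartition sides into a clique by adding edges; one must then check that these added edges do not create spurious closed-neighborhood traces that would let a non-shattered set masquerade as shattered, which is handled by the usual trick of attaching private pendant/gadget vertices (or padding with twin-destroying dummies) so that the closed neighborhoods restricted to the candidate solution set still encode exactly the original membership pattern.

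The \textbf{main obstacle} I expect is the second part: ensuring the \emph{equivalence} of the reduction survives the modifications for each graph class. Passing to closed neighborhoods $N[v]$ rather than the raw hyperedges changes the set system (every vertex now lies in its own neighborhood), and adding clique edges for the split/co-bipartite cases further perturbs which pairs of vertices are distinguished. The delicate verification is that, after these changes, a set $C$ of $k$ vertices shatters in the neighborhood hypergraph of $G$ \emph{if and only if} the corresponding set shatters in~$H$ --- in particular that no unintended subset of $C$ is realized as some $N[v]\cap C$ purely because of the newly added structure. I would control this by making the gadget large enough (polynomially many private vertices per element) that the forbidden traces are provably unachievable, and by invoking Observation~\ref{obs} to pass to the twin-free reduct so that no counting artifact is introduced. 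Since all gadgets have size polynomial in $|X|+|E|$ and the parameter $k$ is preserved exactly (or increased by an additive constant), the map is a valid FPT reduction.
```
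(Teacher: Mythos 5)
Your membership argument coincides with the paper's: containment in W[1] follows from the $\Sigma_1$ formula of Proposition~\ref{prop:logic} via the standard translation into bounded-weft circuits, so that half is fine.

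The hardness half has a genuine gap. Although you first say you will ``adapt'' the Downey--Evans--Fellows reduction, the device you actually describe is a black-box reduction from \VCDIM on general hypergraphs: take the bipartite incidence graph $G$ of $H=(X,E)$ and argue that a $k$-set is shattered in $H$ if and only if it is shattered in the neighborhood hypergraph of $G$. The converse direction of that equivalence is false, and no pendant or padding gadget can repair it: since $X$ is an independent set in $G$, for any $C\subseteq X$ every $x\in C$ realizes the singleton trace $N[x]\cap C=\{x\}$ and every $x\in X\setminus C$ realizes the empty trace, \emph{for free}. Hence a set $C$ all of whose subsets of size at least two are traces of hyperedges of $H$, but which is far from shattered in $H$ because some singleton or $\emptyset$ is missing, is nevertheless shattered in the neighborhood hypergraph of $G$; a NO-instance can map to a YES-instance. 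Attaching private vertices only \emph{adds} traces and cannot remove these spurious ones (and degree-one pendants are in any case incompatible with co-bipartiteness). The paper sidesteps this entirely: it reduces from \textsc{Clique}, constructing the incidence graph of a hypergraph that by design already contains a hyperedge for every trace of size $\neq 2$ (the $F_{\neq 2}$ vertices), so the free singleton and empty traces are harmless and only the size-two traces carry information, namely the edges of the \textsc{Clique} instance. Forcing any shattered set into the $X$-side, and the split and co-bipartite variants, are then handled by explicit degree, twin and empty-class counting arguments rather than by generic gadgets. To salvage your plan you would have to either reduce from \textsc{Clique} as the paper does, or first establish W[1]-hardness of the ``all traces of size at least two'' variant of \VCDIM, which is not the known result you invoke.
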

\begin{proof}
The membership in W[1] follows from Proposition~\ref{prop:logic} (see the book~\cite{FG06}).

For the hardness part, we modify the reduction of~\cite{DEF93} for \VCDIM in order to strengthen the result. 
We give an FPT-reduction from \textsc{Clique}, where given a graph $G=(V,E)$ and an integer $k$, the question is to determine if there is a clique of size $k$ in $G$. 
We consider instances of \textsc{Clique} parameterized by $k$, where $|V| \geq k$. Clearly we may assume that $k>3$.
We denote by $[k]$ the set $\{1,\dots,k\}$.
We construct a bipartite graph $G'$ and let $k'=k$. 
We will prove that the neighborhood hypergraph of $G'$ has VC dimension at least $k'$ if and only if $G$ has a clique of size $k$. (For simplicity, here we consider \VCDIM as a graph problem and use graph theory terminology instead of neighborhood hypergraphs, as is done for example in~\cite{bousquet,KKRUW97}.)

Let us describe the construction. We first present the reduction and its proof for bipartite graphs, and later show how to modify it to obtain a split or co-bipartite graph.

We build a bipartite graph $G'=((X,F), E')$ (this is the incidence graph of the hypergraph built in the reduction from~\cite{DEF93}).
We define the first partite set of $G'$ as $X=\{(u,i) : u\in V,i\in [k]\}$. 
The vertices of the second partite set $F$ are defined as follows. 
We let $F = F_2 \cup F_{\neq 2}$ such that $F_2 = \{f_{\{(u,i), (v,j)\}}, uv \in E \text{ and } i,j \in [k]\}$, and $F_{\neq 2} = \{f_L, L\subseteq [k]\text{ and } |L| \neq 2\}$. Each vertex $f_{\{(u,i), (v,j)\}}$ is of degree~$2$ and adjacent to $(u,i)$ and $(v,j)$. Each vertex $f_L$ of $F_{\neq 2}$ is adjacent to the vertices in the set $\{ (u,i) : u \in V, i \in L \}$.

The order of $G'$ is $O(k|V|+k^2|E|+2^k)$. We give a simple example of
the reduction in Figure~\ref{fig:reduction-Clique-VCDIM}.

\begin{figure}[ht!]
\centering
\begin{tikzpicture}[scale=0.9, transform shape]

  \node[draw, shape=circle] at (-1,0) (a) {};
    \path (a)+(0,0.5) node {$a$};
  \node[draw, shape=circle] at (0,0) (b) {};
    \path (b)+(0,0.5) node {$b$};
  \node[draw, shape=circle] at (1,0) (c) {};
    \path (c)+(0,0.5) node {$c$};
  \draw (a)--(b)--(c);
  \node at (0,-1) {Instance of \textsc{Clique}:};
  \node at (0,-1.5) {graph $G=(V,E)$, integer $k=3$.};
  
  \begin{scope}[xshift=5cm]
    \node[draw, shape=circle] at (0,3) (a1) {};
    \path (a1)+(-0.75,0) node {$(a,1)$};
    \node[draw, shape=circle] at (0,2) (a2) {};
    \path (a2)+(-0.75,0) node {$(a,2)$};
    \node[draw, shape=circle] at (0,1) (a3) {};
    \path (a3)+(-0.75,0) node {$(a,3)$};
    \node[draw, shape=circle] at (0,0) (b1) {};
    \path (b1)+(-0.75,0) node {$(b,1)$};
    \node[draw, shape=circle] at (0,-1) (b2) {};
    \path (b2)+(-0.75,0) node {$(b,2)$};
    \node[draw, shape=circle] at (0,-2) (b3) {};
    \path (b3)+(-0.75,0) node {$(b,3)$};
    \node[draw, shape=circle] at (0,-3) (c1) {};
    \path (c1)+(-0.75,0) node {$(c,1)$};
    \node[draw, shape=circle] at (0,-4) (c2) {};
    \path (c2)+(-0.75,0) node {$(c,2)$};
    \node[draw, shape=circle] at (0,-5) (c3) {};
    \path (c3)+(-0.75,0) node {$(c,3)$};

    \node at (0,4) {{$X$}};
    
    \node[draw, shape=circle] at (6,5) (a1b1) {};
    \path (a1b1)+(1.1,0) node {$f_{\{(a,1),(b,1)\}}$};
    \node[draw, shape=circle] at (6,4.5) (a1b2) {};
    \path (a1b2)+(1.1,0) node {$f_{\{(a,1),(b,2)\}}$};
    \node[draw, shape=circle] at (6,4) (a1b3) {};
    \path (a1b3)+(1.1,0) node {$f_{\{(a,1),(b,3)\}}$};
    \node[draw, shape=circle] at (6,3.5) (a2b1) {};
    \path (a2b1)+(1.1,0) node {$f_{\{(a,2),(b,1)\}}$};
    \node[draw, shape=circle] at (6,3) (a2b2) {};
    \path (a2b2)+(1.1,0) node {$f_{\{(a,2),(b,2)\}}$};
    \node[draw, shape=circle] at (6,2.5) (a2b3) {};
    \path (a2b3)+(1.1,0) node {$f_{\{(a,2),(b,3)\}}$};
    \node[draw, shape=circle] at (6,2) (a3b1) {};
    \path (a3b1)+(1.1,0) node {$f_{\{(a,3),(b,1)\}}$};
    \node[draw, shape=circle] at (6,1.5) (a3b2) {};
    \path (a3b2)+(1.1,0) node {$f_{\{(a,3),(b,2)\}}$};
    \node[draw, shape=circle] at (6,1) (a3b3) {};
    \path (a3b3)+(1.1,0) node {$f_{\{(a,3),(b,3)\}}$};
    \node[draw, shape=circle] at (6,0.5) (b1c1) {};
    \path (b1c1)+(1.1,0) node {$f_{\{(b,1),(c,1)\}}$};
    \node[draw, shape=circle] at (6,0) (b1c2) {};
    \path (b1c2)+(1.1,0) node {$f_{\{(b,1),(c,2)\}}$};
    \node[draw, shape=circle] at (6,-0.5) (b1c3) {};
    \path (b1c3)+(1.1,0) node {$f_{\{(b,1),(c,3)\}}$};
    \node[draw, shape=circle] at (6,-1) (b2c1) {};
    \path (b2c1)+(1.1,0) node {$f_{\{(b,2),(c,1)\}}$};
    \node[draw, shape=circle] at (6,-1.5) (b2c2) {};
    \path (b2c2)+(1.1,0) node {$f_{\{(b,2),(c,2)\}}$};
    \node[draw, shape=circle] at (6,-2) (b2c3) {};
    \path (b2c3)+(1.1,0) node {$f_{\{(b,2),(c,3)\}}$};
    \node[draw, shape=circle] at (6,-2.5) (b3c1) {};
    \path (b3c1)+(1.1,0) node {$f_{\{(b,3),(c,1)\}}$};
    \node[draw, shape=circle] at (6,-3) (b3c2) {};
    \path (b3c2)+(1.1,0) node {$f_{\{(b,3),(c,2)\}}$};
    \node[draw, shape=circle] at (6,-3.5) (b3c3) {};
    \path (b3c3)+(1.1,0) node {$f_{\{(b,3),(c,3)\}}$};

    \node (x1) at (8,5.5) {};
    \node (x2) at (8,-4) {};
    \draw[snake={brace},segment amplitude=3mm] (x1) to (x2);
    \node at (8.75,0.75) {{ $F_2$}};

    \node[draw, shape=circle, label=right:$f_{\emptyset}$] at (6,-4) (f0) {};
    \node[draw, shape=circle, label=right:$f_{\{1\}}$] at (6,-4.5) (f1) {};
    \node[draw, shape=circle,label=right:$f_{\{2\}}$] at (6,-5) (f2) {};
    \node[draw, shape=circle,label=right:$f_{\{3\}}$] at (6,-5.5) (f3) {};
    \node[draw, shape=circle,label=right:$f_{\{1,2,3\}}$] at (6,-6) (f123) {};

    \node (x3) at (7.75,-3.7) {};
    \node (x4) at (7.75,-6.35) {};
    \draw[snake={brace},segment amplitude=2mm] (x3) to (x4);
    \node at (8.5,-5.1) {{ $F_{\neq 2}$}};
    
  \node at (3,-7) {Instance of \VCDIM corresponding to $(G,k)$:};
  \node at (3,-7.5) {bipartite graph $G'=((X,F),E')$, integer $k=3$.};

    \draw (a1)--(a1b1)--(b1)
    (a1)--(a1b2)--(b2)
    (a1)--(a1b3)--(b3)
    (a2)--(a2b1)--(b1)
    (a2)--(a2b2)--(b2)
    (a2)--(a2b3)--(b3)
    (a3)--(a3b1)--(b1)
    (a3)--(a3b2)--(b2)
    (a3)--(a3b3)--(b3)
    (b1)--(b1c1)--(c1)
    (b1)--(b1c2)--(c2)
    (b1)--(b1c3)--(c3)
    (b2)--(b2c1)--(c1)
    (b2)--(b2c2)--(c2)
    (b2)--(b2c3)--(c3)
    (b3)--(b3c1)--(c1)
    (b3)--(b3c2)--(c2)
    (b3)--(b3c3)--(c3)
    (f123)--(a1)--(f1)--(b1)--(f123)
    (f123)--(a2)--(f2)--(b2)--(f123)
    (f123)--(a3)--(f3)--(b3)--(f123)
    (f1)--(c1)--(f123) (f2)--(c2)--(f123) (f3)--(c3)--(f123);
    
  \end{scope}

\end{tikzpicture}
\caption{Example of the reduction from the proof of Theorem~\ref{prop:vchard}.}
\label{fig:reduction-Clique-VCDIM}
\end{figure}

If there is a clique $C = \{v_1, \ldots, v_k\}$ in $G$, consider the following set $S \subseteq X$ of size $k$: $S = \{(v_1,1), (v_2,2), \ldots, (v_k, k)\}$. 
Now, for any subset $S' \subseteq S$: 
if $|S'|\neq 2$, then for $L=\{i: (v_i,i) \in S'\}$ we have $X \cap f_L = S'$, and the vertex $f_L$ represents the equivalence class $S'$.
If $|S'|=2$ (assume $S' = \{ (v_i,i), (v_j, j) \}$ with $i \neq j$), then $v_iv_j \in E$ because $C$ is a clique. 
Hence, $f_{S'} \in F_2$ and is a representative of the equivalence class $S'$. Thus, $S$ induces all possible $2^k$ equivalence classes.

For the converse direction, let $S$ be a set of $k$ vertices of $G'$ that induces $2^k$ equivalence classes in $G'$. First of all, we claim that the set $S$ cannot contain vertices of both $X$ and $F$. Assume to the contrary that there is a vertex $x\in X$ and a vertex $f\in F$ in $S$. Since $k>3$, we would need a vertex in $G'$ at distance at most~$1$ from both $x$ and $f$ and distinct from $x$ and $f$, which is impossible since $G'$ is bipartite.

If $S \subseteq F$, then we claim that $S$ cannot induce enough equivalence classes. Indeed, $S$ cannot contain vertices of $F_2$ since each vertex $f$ in $F_2$ has degree~$2$ and each vertex $s$ in $S$ should have degree at least $2^{k-1}-1>2$ (because half of the possible equivalence classes contain $s$). 
Thus, if $S \subseteq F$ then $S \subseteq F_{\neq 2}$. 
But for all $i \in [k]$ and for all $u,v\in V$, $(u,i)$ and $(v,i)$ have the same set of neighbors in $F_{\neq 2}$. 
Thus, $S$ can induce at most $2k$ classes: at most $k$ with a representative in $X$, and at most $k$ with a representative in $F$ (the latter ones are classes of size~$1$). This is a contradiction since $k>3$ and $S$ is a set that induces $2^k$ equivalence classes.

Therefore, $S \subseteq X$. It remains to show that the set $S_G=\{v_i\in V: (v_i,j)\in S\}$ forms a clique of size $k$ in $G$. For each subset $S'$ of $S$ with size at least~$3$, the corresponding equivalence class can only be realized by a vertex $f$ with $N(f)\cap S=S'$. Thus, $f$ belongs to $F_{\neq 2}$ and $f$ has at least three neighbours in $S$. But the number of vertices of $F_{\neq 2}$ with at least three neighbours in $X$ is precisely the number of subsets of $S$ of size at least~$3$ (because $|S|=k$ and there are exactly $2^k-{k\choose 2}$ vertices in $F_{\neq 2}$, one for each subset of $[k]$ whose size is not~$2$). Therefore, each vertex in $F_{\neq 2}$ with at least three neighbours in $X$ is the (unique) representative for an equivalence class corresponding to a subset $S'$ of $S$ with $|S'|\geq 3$. Now, consider the subsets $S' \subseteq S$ with $|S'| = 2$ (assume that $S'=\{(u,i),(v,j)\}$). There must be some $f$ in $F$ with $N(f) \cap S = S'$. By the previous argument and since $f$ needs at least two neighbours in $X$, we have $f\notin F_{\neq 2}$, and thus $f\in F_2$.
Hence, by the definition of $F_2$, $u$ and $v$ are distinct and adjacent in $G$. Therefore, $S_G$ is a clique of size $k$ in $G$. This completes the proof for bipartite graphs.

We may modify this construction to obtain a split graph (then we make $X$ to a clique) or a co-bipartite graph (then we make both $X$ and $F$ into cliques). The arguments are almost the same. The main difference is in the converse direction, when proving that no solution $S$ can contain vertices of both $X$ and $F$. If $G'$ is co-bipartite, this would make it impossible to induce the empty equivalence class, a contradiction. If $G'$ is split, assume to the contrary that we have $x\in X$ and $f\in F$ and both $x,f\in S$. Then, we have $2^{k-2}$ equivalence classes that contain $f$ but do not contain $x$. The vertices corresponding to these classes necessarily belong to $F$ (since $X$ is a clique), but since $F$ is an independent set there is at most one such vertex ($f$ itself). Since $k>3$ we have $2^{k-2}>2$, a contradiction.

The case that $S\subseteq F$ is handled similarly as before, by observing that $F$ or $X$ being a clique would not help to increase the number of equivalence classes induced by $S$. The rest of the proof is the same.
\end{proof}

We now apply the technique of \emph{random separation} introduced in~\cite{randomsep}. Let $d_G(u,v)$ be the distance between $u$ and $v$ in $G$, and for two subsets of vertices of $G$, $V_1$ and $V_2$, let $d_G(V_1,V_2) = \min\{d_G(u,v) | u\in V_1, v \in V_2\}$.

\begin{theorem}[{\cite[Theorem 4]{randomsep}}]\label{thm:randomsep}
Let $G=(V,E)$ be a graph of maximum degree~$\Delta$. Let $\phi:2^V\to \mathbb{R}\cup\{-\infty,+\infty\}$ be an objective function to be optimized. Then, it takes $O(f(k,\Delta)|V|^{\max\{c',c+1\}}\log |V|)$ time (for some computable function $f$) to find a set $V'$ of $k$ vertices in $G$ that optimizes $\phi(V')$ if the following conditions are satisfied.
\begin{enumerate}
\item For each subset $V'$ of $V$ of size at most $k$, $\phi(V')$ can be computed in time $O(g(k,\Delta)|V|^c)$ for some computable function $g$ and constant $c>0$.
\item There is a positive integer $i$ computable in time $O(h(k,\Delta)|V|^{c'})$ for some computable function $h$ and constant $c'>0$ such that for each pair $V_1,V_2$ of subsets of $V$ with $|V_1|+|V_2|\leq k$, if $d_G(V_1,V_2)>i$ 
then $\phi(V_1\cup V_2)=\phi(V_1)+\phi(V_2)$.
\end{enumerate}
\end{theorem}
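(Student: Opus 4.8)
The plan is to use the \emph{random separation} paradigm: we randomly two-colour the vertices, argue that with probability bounded below by a function of $k$ and $\Delta$ alone an optimal solution becomes ``isolated'' in a way that lets us recover it by a polynomial-time, component-wise computation, and then derandomize. First I would fix an optimal solution $V^*$ of size $k$ and use the separability hypothesis (Condition~2) to decompose it. Define a relation on $V^*$ by joining two vertices whenever their distance in $G$ is at most $i$, and let $C_1,\dots,C_t$ be the connected components (``clusters'') of this relation. Distinct clusters are pairwise at distance more than $i$, so by Condition~2 we have $\phi(V^*)=\sum_{j=1}^{t}\phi(C_j)$, and each $|C_j|\leq k$.

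Next, colour each vertex green or red independently with probability $1/2$. Let $B=\{v\notin V^* : d_G(v,V^*)\leq i\}$ be the distance-$i$ boundary of the solution; since $G$ has maximum degree $\Delta$, the balls of radius $i$ have size $O(\Delta^i)$, whence $|V^*\cup B|=O(k\Delta^i)$. Consider the \emph{good event} that every vertex of $V^*$ is green and every vertex of $B$ is red (these sets are disjoint, so this is consistent). I would then show that, conditioned on this event, each cluster $C_j$ is \emph{exactly} one connected component of the auxiliary graph on the green vertices in which two green vertices are joined iff their $G$-distance is at most $i$: any green vertex within distance $i$ of $C_j$ must lie in $V^*$ (otherwise it is in $B$ and hence red), so it lies in $C_j$ itself, while two distinct clusters cannot be linked because they are at distance more than $i$ and no green ``bridge'' vertex can exist. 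The probability of the good event is at least $2^{-O(k\Delta^i)}$, i.e. $1/f_0(k,\Delta)$.

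Given such a colouring, the algorithm discards the red vertices, computes the green distance-$i$ components (by bounded-radius BFS, using that each ball has size at most $O(\Delta^i)$), keeps those of size at most $k$, evaluates $\phi$ on each using Condition~1 in time $O(g(k,\Delta)|V|^c)$, and finally runs a knapsack-style dynamic program selecting a sub-collection of components of total size exactly $k$ optimizing the sum of their $\phi$-values. By separability the value of any such union equals the sum of the component values, so the union of the clusters of $V^*$ (recovered exactly under the good event) is among the candidates and achieves $\phi(V^*)$; conversely every candidate is a genuine feasible set of size $k$, so the best candidate over all colourings equals the optimum.

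Finally I would derandomize: instead of random colourings, use an $(n,O(k\Delta^i))$-universal set (equivalently a suitable family of splitters or perfect hash functions) of size $2^{O(k\Delta^i)}\log n$, constructible within the time budget, which guarantees that the prescribed colouring of the particular set $V^*\cup B$ is realized by at least one member of the family. Running the above over all members and returning the best solution yields the claimed bound $O(f(k,\Delta)|V|^{\max\{c',c+1\}}\log|V|)$: the $|V|^{c'}$ factor comes from computing $i$, the $|V|^{c+1}$ factor from evaluating $\phi$ on up to $|V|$ components, and the $\log|V|$ factor from the size of the universal set. The main obstacle, and the part deserving the most care, is the isolation argument together with the matching probability and derandomization bounds: one must verify that the red boundary simultaneously prevents spurious green vertices from enlarging a cluster and prevents two clusters from being merged, and that the quantity $|V^*\cup B|=O(k\Delta^i)$ is exactly what controls both the success probability and the size of the universal set.
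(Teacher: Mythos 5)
This theorem is not proved in the paper: it is imported verbatim as Theorem~4 of the cited random-separation paper of Cai, Chan and Chan, and the authors only remark later (in the proof of Theorem~\ref{thm:boundeddegree}) that the $2^{O(k\Delta^2)}$ factor comes from inspecting the proof sketch in that source. Your reconstruction is exactly that argument --- random two-colouring, the good event ``$V^*$ green, distance-$i$ boundary red'', the isolation of the distance-$\leq i$ clusters of $V^*$ as green components, component-wise evaluation of $\phi$ plus a knapsack DP justified by Condition~2, and derandomization via $(n,O(k\Delta^i))$-universal sets --- and I find no gap in it: the inductive application of Condition~2 to merge clusters is legitimate because all unions involved stay inside a set of size $k$ and pairwise distances exceed $i$, and the isolation argument correctly rules out both spurious enlargement and merging of clusters. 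The one point worth flagging is that your $f(k,\Delta)$ absorbs $2^{O(k\Delta^i)}$, which is only a function of $k$ and $\Delta$ if $i$ itself is bounded in terms of $k$ and $\Delta$; the statement as transcribed only says $i$ is \emph{computable} in time $O(h(k,\Delta)|V|^{c'})$, so strictly speaking this boundedness is an implicit hypothesis (satisfied in the paper's only application, where $i=2$). You might state that assumption explicitly, but it is a defect of the theorem statement rather than of your proof.
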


\begin{theorem}\label{thm:boundeddegree}
\NEQCLkl is solvable in time $2^{O(k\Delta^2)}n\log n$ on neighborhood hypergraphs of graphs of maximum degree~$\Delta$ and order $n$.
\end{theorem}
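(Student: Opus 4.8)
The plan is to apply the random separation framework of Theorem~\ref{thm:randomsep}, treating \NEQCLkl on the neighborhood hypergraph of $G$ as a graph problem: we select $C\subseteq V(G)$ with $|C|=k$, and each equivalence class corresponds to a distinct set $N[v]\cap C$ over $v\in V(G)$. The crucial choice is the objective function. Rather than counting \emph{all} equivalence classes, I would define $\phi(V')$ to be the number of distinct \emph{nonempty} sets $N[v]\cap V'$ (the number of nonempty equivalence classes induced by $V'$). Writing $c(S)$ for the total number of classes induced by $S$, we have $c(S)=\phi(S)+e(S)$, where $e(S)\in\{0,1\}$ equals $1$ exactly when some vertex $v$ satisfies $N[v]\cap S=\emptyset$, that is, when $S$ fails to dominate $G$. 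Running Theorem~\ref{thm:randomsep} will maximize $\phi$ over size-$k$ sets, and it then remains to recover $\max_S c(S)$ and compare it with $\ell$.

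The reason for working with $\phi$ rather than $c$ is that the empty equivalence class is a global object that destroys additivity, whereas the nonempty classes are local; handling it is the main obstacle, and I would resolve it by a case distinction on $n$. If $n\le k(\Delta+1)$, then $n$ is bounded by a function of $k$ and $\Delta$, so I would simply brute-force over all $\binom{n}{k}$ candidate sets, which fits within the claimed running time. Otherwise $n>k(\Delta+1)$; since a size-$k$ set dominates at most $k(\Delta+1)$ vertices, no size-$k$ set is dominating, whence $e(S)=1$ for every size-$k$ set $S$ and thus $c(S)=\phi(S)+1$ uniformly. In this regime maximizing $\phi$ is equivalent to maximizing $c$, and I would add $1$ to the returned optimum before comparing with $\ell$.

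It then remains to verify the two hypotheses of Theorem~\ref{thm:randomsep} for $\phi$. For the locality condition (item~1), I would observe that only vertices of $N[V']:=\bigcup_{s\in V'}N[s]$ can contribute a nonempty set $N[v]\cap V'$, and that $|N[V']|\le k(\Delta+1)$; hence $\phi(V')$ is computable by inspecting a region whose size depends only on $k$ and $\Delta$, so the exponent $c$ in the theorem can be taken to be $0$ (the computation time is independent of $n$). For the decomposition condition (item~2), I would take $i=2$. If $d_G(V_1,V_2)>2$, then no vertex $v$ can have $N[v]$ meeting both $V_1$ and $V_2$ (otherwise a vertex of $V_1$ and a vertex of $V_2$ would both lie in $N[v]$, hence at distance at most $2$), and moreover $V_1$ and $V_2$ are disjoint. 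Consequently every nonempty set $N[v]\cap(V_1\cup V_2)$ is a nonempty subset of exactly one of $V_1,V_2$, and these two families are disjoint, so $\phi(V_1\cup V_2)=\phi(V_1)+\phi(V_2)$, as required; this is precisely the reformulation of Observation~\ref{obs:reform} applied separately to each side.

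With $c=0$ and the integer $i=2$ computable in constant time, Theorem~\ref{thm:randomsep} yields a running time of the form $O(f(k,\Delta)\,n\log n)$. The set of vertices within distance $2$ of a $k$-set in a graph of maximum degree $\Delta$ has size $O(k\Delta^2)$, which is what governs the separation factor, so $f(k,\Delta)=2^{O(k\Delta^2)}$; this matches the claimed bound $O(2^{k\Delta^2}n\log n)$. I expect the genuinely delicate point to be the treatment of the empty class, handled by the domination case split above, together with the choice $\phi=c_{\neq\emptyset}$ that makes the objective local; the remaining verifications of items~1 and~2 are routine.
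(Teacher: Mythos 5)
Your proof follows the same route as the paper's: both apply the random-separation meta-theorem (Theorem~\ref{thm:randomsep}) with separation distance $i=2$ and a locally computable, additive objective, so the method and the resulting $O(2^{k\Delta^2})n\log n$ bound coincide. Your extra care with the empty equivalence class (taking $\phi$ to count only nonempty classes and splitting on whether $n>k(\Delta+1)$) is in fact a welcome refinement rather than a detour: the paper's verification of the additivity condition $\phi(V_1\cup V_2)=\phi(V_1)+\phi(V_2)$ silently double-counts the empty class when neither $V_1$ nor $V_2$ dominates the graph, and your case distinction repairs exactly that point.
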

\begin{proof}
We apply Theorem~\ref{thm:randomsep}. For the first condition, one can compute in time $O(k\Delta)$ the number of equivalence classes induced by any set of $k$ vertices, so $c=0$ and $g(k,\Delta)=k\Delta$.

The second condition is satisfied for $i=2$ by the definition of an equivalence class. Indeed, assume we have two sets $V_1$ and $V_2$ at distance at least~$3$ apart. Then, no vertex is adjacent both to a vertex of $V_1$ and $V_2$, and thus, each equivalence class induced by $V_1\cup V_2$ is induced by $V_1$ alone or $V_2$ alone. Thus $h(d,\Delta)=1$ and $c'=0$.

For the running time, although it is not stated explicitly in the meta-theorem Theorem~\ref{thm:randomsep}, it can be deduced by a careful analysis of the proof sketch in~\cite{randomsep}. Indeed, the main feature of the technique is to generate some suitable $2$-partitions of $V(G)$, that are called \emph{$i$-separating}. This step takes time $2^{O(k\Delta^i)}$. Then, for each such partition, the algorithm runs in time $O((g(k,\Delta)+k+\Delta)n^{\max\{c',c+1\}}\log n)$. There is an additional preprocessig phase running in time $h(k,\Delta)n$. Thus, this amounts in a total running time of $O((h(k,\Delta)+2^{O(k\Delta^i)}+g(k,\Delta)+k+\Delta)n^{\max\{c',c+1\}}\log n)$, which in our case is $2^{O(k\Delta^2)}n\log n$. We refer the interested reader to~\cite{randomsep} for all details.
\end{proof}


\subsection{Dual parameterizations}\label{sec:dual-param}

We now study the four dual parameterizations for \NEQCLkl, that is, parameters $|E|-k$,  $|X|-k$, $|E|-\ell$ and $|X|-\ell$. Note that \PBDS is NP-hard for neighborhood hypergraphs of graphs (see for example~\cite{F13}). This corresponds to \NEQCLkl with $|X|=|E|=\ell$ and hence \NEQCLkl is not in XP for parameters $|E|-\ell$ and $|X|-\ell$.

\begin{theorem}\label{thm:dual-params}
\NEQCLkl belongs to XP but is W[1]-hard when parameterized by $|X|-k$.
\end{theorem}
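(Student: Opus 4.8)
The theorem has two parts: membership in XP, and W[1]-hardness, both under the parameter $|X|-k$. I will address each separately.

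\paragraph{XP membership.}
The plan is to exploit the fact that the parameter $p=|X|-k$ counts the vertices \emph{left out} of the solution set $C$. Instead of choosing the $k$ vertices in $C$ directly (of which there are $\binom{|X|}{k}$, not polynomially many when $k$ is large), I would enumerate the complement: every candidate solution $C$ of size $k$ is determined by its complement $D=X\setminus C$, which has size exactly $p$. There are $\binom{|X|}{p}\leq |X|^p$ such sets $D$, which is polynomial in $|X|$ for fixed $p$. For each such $D$, I set $C=X\setminus D$ and compute, in time $O((|X|+|E|)\cdot p)$ or so, the number of equivalence classes induced by $C$ (for each hyperedge $e\in E$, compute $e\cap C = e\setminus D$ and count distinct values). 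I then check whether this number is at least $\ell$. The total running time is $|X|^{p}\cdot \mathrm{poly}(|X|+|E|) = |I|^{f(p)+O(1)}$, which places \NEQCLkl in XP for parameter $|X|-k$.

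\paragraph{W[1]-hardness.}
Here the main work lies. I want an FPT-reduction from a known W[1]-hard problem to \NEQCLkl where the new parameter $|X|-k$ is bounded by a function of the source parameter. The natural candidate is again \textsc{Clique}, but the parameterization is now ``dual'', so the reduction must be structured so that the gap $|X|-k$ (not $k$ itself) encodes the clique size. Concretely, from a \textsc{Clique} instance $(G,r)$ I would build a hypergraph $H=(X,E)$ together with integers $k$ and $\ell$ such that the quantity $|X|-k$ equals $r$ (or a fixed function of $r$), and such that choosing $k=|X|-r$ vertices to induce many classes is equivalent to choosing which $r$ vertices to \emph{discard} so as to select a clique. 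In other words, I would arrange things so that discarding a set $D$ of $r$ vertices corresponds to picking $r$ vertices of $G$, and the induced number of equivalence classes reaches the threshold $\ell$ exactly when these $r$ vertices form a clique. Gadgets enforcing twin-freeness and controlling how discarding a vertex from $D$ merges or splits equivalence classes will be needed to make the count behave monotonically with respect to the clique condition.

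\paragraph{Main obstacle.}
The XP part is routine. The difficulty is entirely in the hardness reduction: the dual parameter $|X|-k$ forces $k$ to be very large (almost all of $X$), so the solution set is huge and one has very little control over it, while the combinatorial choice (the clique) must be encoded in the tiny discarded complement. Designing gadgets so that the number of induced equivalence classes is a faithful, threshold-sensitive indicator of whether the $r$ discarded vertices form a clique is the crux. In particular, because \emph{adding} vertices to $C$ can only increase the number of classes (Observation~\ref{obs:add-remove}), one must ensure that discarding the ``wrong'' vertices does not accidentally preserve enough classes to reach $\ell$; this typically requires padding the hypergraph with many hyperedges whose separation is controlled precisely by whether a pair of discarded vertices is adjacent in $G$, mirroring the edge-gadget idea of the $F_2$ construction in the proof of Theorem~\ref{prop:vchard} but read ``in reverse.''
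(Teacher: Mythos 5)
Your XP argument is exactly the paper's: enumerate the $\binom{|X|}{|X|-k}=O(|X|^{|X|-k})$ candidate solutions (equivalently, their complements) and count the induced classes for each in polynomial time. That half is fine.

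The hardness half, however, is a genuine gap: you have described the \emph{shape} a reduction should have (encode the source problem in the discarded set $D$ of size $r$), but you never construct it, and the construction is the entire content of this part of the theorem. You also point at the wrong source problem and the wrong target regime, which makes the task look harder than it is. The paper reduces from \textsc{Independent Set} parameterized by solution size $s$, not from \textsc{Clique}, and it targets the special case $\ell=|E|$ (i.e., \PBDS, where \emph{every} pair of hyperedges must be separated). Concretely: $X$ has a vertex $x_v$ for each $v\in V(G)$ and a vertex $x_e$ for each $e\in E(G)$; for each edge $e=uv$ one adds the hyperedges $E_e=\{x_u,x_v,x_e\}$ and $E'_e=\{x_e\}$, plus one empty hyperedge; and $k=|X|-s$, $\ell=|E|=2m_G+1$. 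Then every $x_e$ must be kept (to separate $\{x_e\}$ from the empty hyperedge), and $E_e$ is separated from $E'_e$ iff at least one of $x_u,x_v$ is kept, so the discarded vertices form an independent set of size $s$ iff all $\ell$ classes are realized. Setting $\ell=|E|$ dissolves the ``threshold-sensitivity'' obstacle you flag at the end: there is no counting argument to control, only the all-pairs-separated condition, and each pair $(E_e,E'_e)$ is governed by exactly one edge of $G$. Without some such concrete gadget your proposal proves nothing for this direction; with the \textsc{Clique}-based framing you chose, you would additionally need hyperedges whose separation certifies \emph{adjacency of two discarded vertices}, which is strictly more delicate than certifying non-adjacency as above.
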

\begin{proof}
We first show that  \NEQCLkl is in XP parameterized by  $|X|-k$. Indeed, given $H=(X,E)$, it suffices to check, for each of the $\binom{|X|}{k}=\binom{|X|}{|X|-k}=O(|X|^{|X|-k})$ subsets of size $k$ of $X$, whether it induces at least $\ell$ equivalence classes in $H$ (the latter can be done in polynomial time by computing all intersections of the given $k$-set with the edges of $H$ and counting how many distinct ones are obtained).

For the hardness part, we give a slight modification of a reduction from~\cite{CGJSY12}\footnote{The reduction in~\cite{CGJSY12} is for \PBTC parameterized by $|E|-k$, which is equivalent by taking the dual hypergraph.}.
We reduce from \textsc{Independent Set} parameterized by the solution size $s$ to \PBDS parameterized by $|X|-k$.

Given a graph $G$ on $n_G$ vertices and $m_G$ edges, we construct a hypergraph $H(G)=(X,E)$ as follows. The vertex set $X$ contains $n_G+m_G$ vertices: for each vertex $v$ of $G$, we have the vertex $x_v$ in $X$; moreover, for each edge $e$ of $G$, we have the vertex $x_e$ in $X$. The hyperedge set $E$ contains $2m_G+1$ hyperedges: for each edge $e$ of $G$, we have the hyperedges $E_e$ and $E'_e$ in $E$, where $E_e=\{x_v, v\in e\}\cup\{x_e\}$ and $E'_e=\{x_e\}$. Moreover, we have an additional empty hyperedge in $E$. Now, we claim that $G$ has an independent set of size $s$ if and only if in $H(G)$, one can find a set of $k=|X|-s$ vertices that induces $\ell=|E|=2m_G+1$ equivalence classes.

Suppose first that $G$ has an independent set $I$ of size $s$. Then, the vertex set $X\setminus\{x_v, v\in I\}$ induces $|E|$ equivalence classes in $H(G)$.

Conversely, if $S$ is a solution to \PBDS of size $k=|X|-s$, then $S$ must contain each vertex $x_e$ of $H(G)$ (otherwise the hyperedge $\{x_e\}$ would not be distinguished from the empty hyperedge). Moreover, we claim that the set of vertices of $G$ $\{v: x_v\notin S\}$ is an independent set (of size $s$) in $G$. Assume by contradiction that there is an edge $e$ between two vertices $u$ and $v$ in $G$ and $x_u,x_v\notin S$. Then, $E_e$ and $E'_e$ are in the same equivalence class, a contradiction.

Since $|X|-k=s$, this reduction shows that \PBDS and \NEQCLkl parameterized by $|X|-k$ are W[1]-hard.
\end{proof}

We do not know whether \NEQCLkl parameterized by $|X|-k$ belongs to W[1]. It is proved in~\cite{CGJSY12} that this holds for the special case of $\ell=|E|$, that is, of \PBDS, by a reduction to \textsc{Set Cover} parameterized by $|E|-k$.

Regarding \VCDIM parameterized by $|X|-k$ (a special case of \NEQCLkl parameterized by $|X|-k$), it is in XP by Theorem~\ref{thm:dual-params}. However, it is not clear whether it is in FPT or W[1]-hard.



Note that if $|X|\leq |E|$, we have $|X|-k\leq |E|-k$ and then \NEQCLkl parameterized by $|E|-k$ is in XP by Theorem~\ref{thm:dual-params}. Nevertheless, we cannot use this in the general case, and need the following dedicated discussion.

\begin{proposition}
\NEQCLkl parameterized by $|E|-k$ belongs to XP.
\end{proposition}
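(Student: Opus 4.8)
The plan is to reduce the problem to the special case \PBDS, for which an XP algorithm under the appropriate dual parameter is available, and then to control the parameter blow-up of the reduction. Throughout write $m=|E|$ and $p=m-k$, and assume by Observation~\ref{obs} that $H$ is twin-free. First I would dispose of the trivial cases: if $\ell>m$ the instance is a NO-instance; if $k\geq|X|$ then either $C=X$ already yields $m\geq\ell$ classes (YES) or no set of size $k$ exists (NO); and if $\ell\leq k+1$ then Lemma~\ref{lemm:useBondy} produces a set of size $k$ with at least $\min\{m,k+1\}\geq\ell$ classes (YES). Hence I may assume $k\leq|X|-1$ and $k+2\leq\ell\leq m$, which crucially forces $m-\ell\leq(m-k)-2=p-2$.

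Next I would reformulate the remaining instance via Observation~\ref{obs:reform} together with a padding argument based on Observation~\ref{obs:add-remove}: the instance is a YES-instance if and only if there is a set $R\subseteq E$ of exactly $\ell$ hyperedges admitting a distinguishing transversal of size at most $k$. Indeed, $\ell$ pairwise-distinguished hyperedges certify $\ell$ equivalence classes over all of $E$, and conversely any such transversal of $R$ can be padded with arbitrary vertices up to size exactly $k$ (possible since $k\leq|X|-1$) without decreasing the number of classes. I would then enumerate the discarded set $D=E\setminus R$: since $|D|=m-\ell\leq p-2$, there are only $\binom{m}{m-\ell}=O(m^{p})$ candidate sets $R$, which is within the XP budget for parameter $p$.

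For each candidate $R$, what remains is to decide whether the sub-hypergraph $(X,R)$ admits a distinguishing transversal of size at most $k$, that is, to solve \PBDS on $(X,R)$ with budget $k$. The key point is that this sub-instance has exactly $\ell$ hyperedges and solution size bounded by $k$, so its own dual parameter equals $|R|-k=\ell-k\leq m-k=p$. I would then invoke that \PBDS parameterized by the number of hyperedges minus the solution size lies in XP; such a statement is obtainable via a reduction to \textsc{Set Cover} parameterized by the universe size minus the number of chosen sets, in the spirit of the reduction of~\cite{CGJSY12} (which even yields membership in W[$1$]$\subseteq$XP). Solving each sub-instance in time $|I|^{f(\ell-k)}\leq|I|^{f(p)}$ for a monotone $f$, and multiplying by the $O(m^{p})$ choices of $R$, gives a total running time of the form $|I|^{g(p)}$, as required.

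The main obstacle is the separation step: enumerating $R$ is cheap, but deciding the existence of a size-$k$ distinguishing transversal of a fixed $R$ is NP-hard in general, so the whole argument hinges on the observation that, once $R$ is fixed, the relevant dual parameter $\ell-k$ is still bounded by $p$, which lets us fall back on the XP membership of \PBDS for this parameter. I would therefore spend most of the care on the two bookkeeping facts that keep the composition inside XP, namely the bound $m-\ell\leq p-2$ forced by the hard regime $\ell\geq k+2$ (which simultaneously controls the number of candidate sets $R$ and the sub-parameter $\ell-k$), and the monotonicity of the function $f$ governing the \PBDS subroutine.
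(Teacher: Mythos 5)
Your proposal is correct and follows essentially the same route as the paper's proof: enumerate the set of $\ell$ representative hyperedges (equivalently the $|E|-\ell$ discarded ones), then solve \PBDS on the resulting sub-hypergraph using the result of~\cite{CGJSY12} for the dual parameter $\ell-k\leq |E|-k$. The only difference is bookkeeping: you bound the enumeration directly by $\binom{|E|}{|E|-\ell}\leq |E|^{|E|-\ell}\leq |E|^{p}$ using $\ell\geq k+2$, whereas the paper first disposes of the case $|E|\leq 2(|E|-k)$ by brute force so as to use the symmetry and monotonicity of binomial coefficients; both yield the same XP bound.
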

\begin{proof}
Since $H$ is twin-free, we have $|X|\leq 2^{|E|}$. Thus, if $|E|\leq 2(|E|-k)$, the whole instance is a kernel and in fact any brute-force algorithm is an FPT algorithm. Hence we also assume that $|E|-k<|E|/2$.

Now, we first guess the set of $\ell$ hyperedges representing the $\ell$ equivalence classes in case we have a YES-instance. This takes time $\binom{|E|}{\ell} = \binom{|E|}{|E|-\ell} \leq \binom{|E|}{|E|-k}$ (since $k \leq \ell$ and $|E|-k<|E|/2$). 
Once the set of $\ell$ hyperedges is fixed, we remove the other hyperedges to obtain a new sub-hypergraph $(X,E')$. Now, finding $k$ vertices that induce exactly $\ell$ classes is precisely \PBDS on $(X,E')$, which is known to be in FPT for parameter $|E'|-k$~\cite{CGJSY12} (more precisely, authors of~\cite{CGJSY12} give this result for \PBTC for parameter $|X|-k$ which is equivalent to \PBDS for parameter $|E'|-k$ by taking the dual hypergraph). Since $|E'|-k=\ell-k\leq |E|-k$ this is an XP algorithm.
\end{proof}

We do not know if \NEQCLkl is in FPT or W[1]-hard for parameter $|E|-k$.
The case $\ell=|E|$ (that is, \PBDS) is shown to be in FPT~\cite{CGJSY12}. Moreover, \VCDIM is (trivially) in FPT when parameterized by $|E|-k$.
Indeed, any instance of \VCDIM with $k>\log_2 |E|$ is trivially a NO-instance. 
On the other hand, when $k\leq \log_2 |E|$, then $|E|-k\geq |E|-\log_2 |E|$.
Therefore, a trivial brute-force algorithm for \VCDIM is a fixed parameter tractable algorithm.

}


\section{Approximation complexity of \MaxNEQCL}

We now study the complexity of approximating \MaxNEQCL.

\subsection{Positive results for \MaxNEQCL}\label{sec:pos-approx}

\begin{proposition}\label{prop:2kk}
\MaxNEQCL is $\frac{\min\{2^k,|E|\}}{k+1}$-approximable in polynomial time. For hypergraphs with VC dimension at most $d$, \MaxNEQCL is $k^{d-1}$-approximable. For hypergraphs with maximum degree~$\Delta$, \MaxNEQCL is $\frac{\Delta+1}{2}$-approximable. 
\end{proposition}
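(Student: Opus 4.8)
The plan is to establish each of the three approximation ratios by comparing a constructed solution's value against an upper bound on the optimum, exploiting the greedy procedure from Lemma~\ref{lemm:useBondy} together with the structural upper bounds (Corollary~\ref{cor:VCdim} and Theorem~\ref{thm:delta-LB}) already proved in the preliminaries. In all three cases the optimum is at most $\min\{2^k,|E|\}$ trivially, so the task reduces to (a) lower-bounding the value of a poly-time-computable solution, and (b) upper-bounding $opt$ under the relevant structural hypothesis.

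For the first ratio, I would run the greedy algorithm of Lemma~\ref{lemm:useBondy}, which in time $O(k(|X|+|E|))$ produces a set $C$ of size $k$ inducing at least $\min\{|E|,k+1\}$ equivalence classes. Since every solution always induces at least $2$ classes (the empty intersection plus a nonempty one, assuming the hypergraph is twin-free and nontrivial), and trivially $opt\leq \min\{2^k,|E|\}$, the ratio is
\[
\frac{opt}{val(C)} \;\leq\; \frac{\min\{2^k,|E|\}}{\min\{|E|,k+1\}} \;\leq\; \frac{\min\{2^k,|E|\}}{k+1},
\]
where the last inequality holds because $\min\{|E|,k+1\}\geq k+1$ whenever $|E|\geq k+1$, and when $|E|<k+1$ the greedy solution already induces all $|E|$ classes so the ratio is $1$. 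This handles the first claim.

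For the second ratio, I would again take the greedy solution $C$, which induces at least $k+1$ classes, and bound $opt$ using Corollary~\ref{cor:VCdim}: any $k$-subset of a hypergraph of VC dimension at most $d$ induces at most $\sum_{i=0}^{d}\binom{k}{i}\leq k^d+1$ classes. Hence
\[
\frac{opt}{val(C)} \;\leq\; \frac{k^d+1}{k+1} \;\leq\; k^{d-1},
\]
the final step being a routine estimate valid for $k\geq 1$ and $d\geq 1$. For the third ratio, I would substitute Theorem~\ref{thm:delta-LB} for the upper bound: $opt\leq \frac{k(\Delta+1)}{2}+1$, while the greedy solution again gives $val(C)\geq k+1$, yielding $opt/val(C)\leq \frac{\Delta+1}{2}$ after simplification.

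The main obstacle I anticipate is not conceptual but arithmetic: each of the three ratios requires a clean inequality chain to match the stated bound exactly, and the boundary cases (small $k$, or $|E|<k+1$, or $d=1$) must be checked so the claimed ratio genuinely dominates. In particular the step $\frac{k^d+1}{k+1}\leq k^{d-1}$ and the step $\frac{\frac{k(\Delta+1)}{2}+1}{k+1}\leq \frac{\Delta+1}{2}$ should be verified carefully rather than asserted, since the additive $+1$ terms could in principle spoil the bound for the smallest parameter values; I expect each to hold but would confirm them explicitly. Everything else follows directly from results already established in the excerpt, so the proof should be short once these estimates are pinned down.
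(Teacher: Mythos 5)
Your proposal is correct and follows essentially the same route as the paper's proof: use the greedy procedure of Lemma~\ref{lemm:useBondy} to get a solution with at least $\min\{|E|,k+1\}$ classes, then divide the trivial bound $\min\{2^k,|E|\}$, the Sauer--Shelah bound $k^d+1$ from Corollary~\ref{cor:VCdim}, and the degree bound $\tfrac{k(\Delta+1)}{2}+1$ from Theorem~\ref{thm:delta-LB} by $k+1$. The two arithmetic estimates you flag for verification do hold exactly as the paper states them (the first needs $k^{d-1}\geq 1$, the second needs $\Delta\geq 1$), so there is nothing to add.
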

\begin{proof}
By Lemma~\ref{lemm:useBondy}, we can always compute in polynomial time, a solution with at least $k+1$ neighborhood equivalence classes (if it exists; otherwise, we solve the problem exactly). 
Since there are at most $\min\{2^k,|E|\}$ possible classes, the first part of the statement follows. Similarly, by Corollary~\ref{cor:VCdim}, if the hypergraph has VC dimension at most~$d$, there are at most $k^{d}+1$ equivalence classes, and $\frac{k^{d}+1}{k+1}\leq k^{d-1}$. Finally, if the maximum degree is at most $\Delta$, by Theorem~\ref{thm:delta-LB} there are at most $\frac{k(\Delta+1)+2}{2}$ possible classes (and when $\Delta\geq 1$, $\frac{k(\Delta+1)+2}{2(k+1)}\leq \frac{\Delta+1}{2}$).
\end{proof}

\begin{corollary}\label{cor:approx-VCdim}
For hypergraphs of VC dimension at most $d$, \MaxNEQCL is $|E|^{(d-1)/d}$-approximable.
\end{corollary}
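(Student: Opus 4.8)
The plan is to combine the two approximation guarantees already established in Proposition~\ref{prop:2kk} by balancing them as a function of $k$. Both guarantees rely on the same greedy solution from Lemma~\ref{lemm:useBondy}, which produces a set inducing at least $\min\{|E|,k+1\}$ equivalence classes (and when $|E|\leq k+1$ this is optimal, so the ratio is $1$). The key observation is that, for a hypergraph of VC dimension at most $d$, the optimum is bounded above in two different ways: it is at most $|E|$ trivially, and at most $k^{d}+1$ by Corollary~\ref{cor:VCdim}. Thus the ratio achieved by the greedy solution is at most $\frac{\min\{|E|,\,k^{d}+1\}}{k+1}$, and I would bound this single quantity by $|E|^{(d-1)/d}$.

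First I would split on the size of $k$ relative to the threshold $|E|^{1/d}$. If $k^{d}\leq |E|$, I use the VC-dimension bound: as already computed in the proof of Proposition~\ref{prop:2kk}, $\frac{k^{d}+1}{k+1}\leq k^{d-1}$, and since $k\leq |E|^{1/d}$ this yields $k^{d-1}\leq (|E|^{1/d})^{d-1}=|E|^{(d-1)/d}$. If instead $k^{d}> |E|$, I use the trivial bound $|E|$ on the optimum: then $\frac{|E|}{k+1}<\frac{|E|}{k}<\frac{|E|}{|E|^{1/d}}=|E|^{(d-1)/d}$, using $k>|E|^{1/d}$. In both cases the ratio is at most $|E|^{(d-1)/d}$, as required.

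The whole argument is essentially a one-line balancing calculation, so there is no real obstacle; the only thing to be careful about is the degenerate regime where Lemma~\ref{lemm:useBondy} cannot produce $k+1$ classes (namely $|E|\leq k+1$), which I would dispose of first by noting that the greedy set is then optimal and the ratio is $1$. I should also confirm that the case split is algorithmically effective, which it is: $k$ and $|E|$ are given with the instance, so the algorithm simply runs the greedy procedure of Lemma~\ref{lemm:useBondy}, and the analysis selects the appropriate upper bound on the optimum depending on whether $k^{d}\leq |E|$.
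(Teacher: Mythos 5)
Your proof is correct and follows essentially the same route as the paper: both balance the $k^{d-1}$ and $|E|/k$ guarantees of Proposition~\ref{prop:2kk} at the threshold $k=|E|^{1/d}$, differing only in which side of the threshold is treated as the default case. Your extra remarks (the degenerate case $|E|\leq k+1$ and the observation that the case split lives in the analysis, not the algorithm) are sound but do not change the argument.
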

\begin{proof}
By Proposition~\ref{prop:2kk}, we have a $\min\{k^{d-1},|E|/k\})$-approximation. If $k^{d-1} < |E|^{(d-1)/d}$ we are done. Otherwise, we have $k^{d-1}\geq |E|^{(d-1)/d}$ and hence $k\geq |E|^{1/d}$, which implies that $\frac{|E|}{k}\leq |E|^{(d-1)/d}$.
\end{proof}

For examples of concrete applications of Corollary~\ref{cor:approx-VCdim}, hypergraphs with no $4$-cycles in their bipartite incidence graph\footnote{In the dual hypergraph, this corresponds to the property that each pair of hyperedges have at most one common element, see for example~\cite{KAR00}.} have VC dimension at most~$3$ and hence we have an $|E|^{2/3}$-approximation for this class.
Hypergraphs with maximum edge-size $d$ also have VC dimension at most $d$. 
Other examples, arising from graphs, are neighborhood hypergraphs of: $K_{d+1}$-minor-free graphs (that have VC dimension at most $d$~\cite{BT15}); graphs of rankwidth at most~$r$ (VC dimension at most $2^{2^{O(r)}}$~\cite{BT15}); interval graphs (VC dimension at most~$2$~\cite{BLLPT15}); permutation graphs (VC dimension at most~$3$~\cite{BLLPT15}); line graphs (VC dimension at most~$3$); unit disk graphs (VC dimension at most~$3$)~\cite{BLLPT15}; $C_4$-free graphs (VC dimension at most~$2$); chordal bipartite graphs (VC dimension at most~$3$~\cite{BLLPT15}); undirected path graphs (VC dimension at most~$3$~\cite{BLLPT15}). 
Typical hypergraph classes with unbounded VC dimension are neighborhood hypergraphs of bipartite graphs, co-bipartite graphs, or split graphs.

In the case of hypergraphs with no $4$-cycles in their bipartite incidence graph (for which \MaxNEQCL has an $|E|^{2/3}$-approximation algorithm by Corollary~\ref{cor:approx-VCdim}), we can also relate \MaxNEQCL to \textsc{Max Partial Double Hitting Set}, defined as follows.

\problemopt{\textsc{Max Partial Double Hitting Set}}{A hypergraph $H=(X,E)$, an integer $k$.}{A subset $C \subseteq X$ of size $k$ maximizing the number of hyperedges containing at least two elements of $C$.}

\begin{theorem}\label{thm:DoubleSetCover}
Any $\alpha$-approximation algorithm for \textsc{Max Partial Double Hitting Set} can be used to obtain a $4\alpha$-approximation algorithm for \MaxNEQCL on hypergraphs without $4$-cycles in their bipartite incidence graph.
\end{theorem}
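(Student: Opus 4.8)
The plan is to run the given $\alpha$-approximation algorithm for \textsc{Max Partial Double Hitting Set} on the very same hypergraph $H$ and integer $k$ to obtain a size-$k$ set $C_D$, to compute in parallel the greedy set $C_B$ furnished by Lemma~\ref{lemm:useBondy}, and to output whichever of the two induces more equivalence classes. To analyze this, I would first extract the structural consequence of the no-$4$-cycle hypothesis, then derive a two-sided comparison between the two objective functions, and finally combine the two candidate solutions with a trivial lower bound.

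Write $N(C)$ for the number of equivalence classes induced by $C$ (the \MaxNEQCL value) and $D(C)$ for the number of hyperedges meeting $C$ in at least two vertices (the Double Hitting value). The crux is the following observation, which is exactly where the hypothesis is used: since the bipartite incidence graph has no $4$-cycle, any two distinct hyperedges share at most one vertex, so two distinct hyperedges cannot have the same intersection of size $\geq 2$ with $C$. Consequently each equivalence class of size at least $2$ is realized by a unique hyperedge, and the map $e\mapsto e\cap C$ restricted to hyperedges with $|e\cap C|\geq 2$ is a bijection onto the realized subsets of size $\geq 2$. Thus the number of classes of size $\geq 2$ is \emph{exactly} $D(C)$. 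Adding the at most $k$ singleton classes and the single possible empty class yields, for every size-$k$ set $C$, the two-sided bound
\[
D(C) \;\leq\; N(C) \;\leq\; D(C) + k + 1 .
\]

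Applying the upper bound to an optimal \MaxNEQCL solution gives $opt_N \leq opt_D + (k+1) \leq 2\max\{opt_D,\,k+1\}$. On the algorithmic side, the lower bound together with the approximation guarantee gives $N(C_D) \geq D(C_D) \geq opt_D/\alpha$, while Lemma~\ref{lemm:useBondy} gives $N(C_B) \geq \min\{|E|,k+1\}$. After disposing of the boundary cases (when $k \geq |X|$ or $|E| \leq k$ the optimum $\min\{|E|,2^k\}$ is found directly, in the latter case by $C_B$ itself), I may assume $|E|\geq k+1$ and $k\leq |X|-1$, so that $N(C_B)\geq k+1$. The returned value is then at least
\[
\max\!\Big\{\tfrac{opt_D}{\alpha},\; k+1\Big\} \;\geq\; \tfrac{1}{\alpha}\max\{opt_D,\,k+1\} \;\geq\; \tfrac{opt_N}{2\alpha},
\]
using $\alpha\geq 1$, which is comfortably within the claimed factor $4\alpha$.

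I expect essentially no technical obstacle once the structural identity ``number of size-$\geq 2$ classes $=D(C)$'' is established; that identity is the only place the no-$4$-cycle assumption enters, and it is the heart of the reduction. The remaining care is purely in verifying the boundary cases so that Lemma~\ref{lemm:useBondy} applies (it requires a twin-free hypergraph, which is assumed throughout by the preliminaries, and $k\leq|X|-1$), after which the ``best of two candidates'' combination is routine.
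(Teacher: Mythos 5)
Your proof is correct, and it shares with the paper the one essential ingredient: the observation that, absent $4$-cycles in the incidence graph, two distinct hyperedges meeting $C$ in at least two vertices cannot lie in the same equivalence class, whence the two-sided bound $D(C)\leq N(C)\leq D(C)+k+1$ (the paper states exactly this, phrased as ``number of classes $=$ number of doubly-hit hyperedges plus the classes of a single or no element of $C$''). Where you diverge is in how the additive $k+1$ is absorbed into a multiplicative factor. The paper keeps a single candidate solution --- the output of the Double Hitting Set algorithm --- and instead lower-bounds the Double Hitting Set optimum itself, claiming $opt_{2HS}(H)\geq k/2$ by greedily pairing vertices inside common hyperedges, which turns $opt_N\leq opt_{2HS}+k+1$ into $opt_N\leq 4\,opt_{2HS}$ and yields the factor $4\alpha$. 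You instead run two algorithms and return the better of the Double Hitting Set solution and the greedy set of Lemma~\ref{lemm:useBondy}, using $\max\{opt_D/\alpha,\,k+1\}\geq \frac{1}{2\alpha}(opt_D+k+1)\geq opt_N/(2\alpha)$. Your route buys two things: a strictly better ratio ($2\alpha$ rather than $4\alpha$), and independence from the claim $opt_{2HS}(H)\geq k/2$, which as stated in the paper silently assumes that $H$ has enough hyperedges of size at least two to be doubly hit (it fails, e.g., when all hyperedges are singletons); your boundary-case analysis via Lemma~\ref{lemm:useBondy} sidesteps such degeneracies cleanly. The only cost is the extra (polynomial) invocation of the greedy procedure and the need to check $k\leq|X|-1$ and twin-freeness, both of which you address.
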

\begin{proof} Let $H=(X,E)$ be a hypergraph without $4$-cycles in its bipartite incidence graph, and let $C\subseteq X$ be a subset of vertices. 
Since $H$ has no $4$-cycles in its bipartite incidence graph, note that if some hyperedge contains two vertices of $X$, then no other hyperedge contains these two vertices.
Therefore, the number of equivalence classes induced by $C$ is equal to the number of hyperedges containing at least two elements of $C$, plus the number of equivalence classes corresponding to a single (or no) element of $C$. 
Therefore, the maximum number $opt(H)$ of equivalence classes for a set of size $k$ is at most $opt_{2HS}(H)+k+1$, where $opt_{2HS}(H)$ is the value of an optimal solution for \textsc{Max Partial Double Hitting Set} on $H$. 
Observing that $opt_{2HS}(H)\geq\frac{k}{2}$ (since one may always iteratively select pairs of vertices covering a same hyperedge to obtain a valid double hitting set of $H$), we get that $opt(H)\leq 3opt_{2HS}(H) +1  \leq 4opt_{2HS}(H)$. 
Moreover, in polynomial time we can apply the approximation algorithm of \textsc{Max Partial Double Hitting Set} to $H$ to obtain a set $C$ inducing at least $\frac{opt_{2HS}(H)}{\alpha}$ neighborhood equivalence classes. 
Thus, $C$ induces at least $\frac{opt(H)}{4\alpha}$ neighborhood equivalence classes.
\end{proof}

The approximation complexity of \textsc{Max Partial Double Hitting Set} is rather open. The problem admits no PTAS (unless NP$\subseteq \cap_{\varepsilon >0}$ BPTIME($2^{n^{\varepsilon}}$))~\cite{K04}, but there is a simple $O(|E|^{1/2})$-factor approximation algorithm (even in the general case) based on the greedy constant-factor algorithm for \textsc{Max Partial Hitting Set}, see~\cite{maxdoubleHSapprox}. We deduce the following corollary, which improves on the $O(|E|^{2/3})$-factor approximation algorithm of Corollary~\ref{cor:approx-VCdim}.

\begin{corollary}\label{cor:approx-maxdoubleHS}
\MaxNEQCL can be approximated within a factor of $O(|E|^{1/2})$ on hypergraphs without $4$-cycles in their bipartite incidence graph.
\end{corollary}

Another special case corresponds to problem \textsc{Max Densest Subgraph} (which, given an input graph, consists of maximizing the number of edges of a subgraph of order~$k$). This is precisely \textsc{Max Partial Double Hitting Set} restricted to hypergraphs where each hyperedge has size at most~$2$ (that is, to graphs). Such instances can be assumed to contain no $4$-cycles in their bipartite incidence graph (a $4$-cycle would imply the existence of two twin hyperedges). 
The best known approximation ratio for \textsc{Max Densest Subgraph} is $O(|E|^{1/4})$~\cite{BCCFV10}.\footnote{Formally, it is stated in~\cite{BCCFV10} as an $O(|V|^{1/4})$-approximation algorithm, but we may assume that the input graph is connected, and hence $|V|=O(|E|)$.} We deduce from this result, the following corollary of Theorem~\ref{thm:DoubleSetCover} for hypergraphs of hyperedge-size bounded by~$2$. This improves on both $O(|E|^{1/2})$-approximation algorithms given by Corollaries~\ref{cor:approx-VCdim} and~\ref{cor:approx-maxdoubleHS} for this case.

\begin{corollary}
Any $\alpha$-approximation algorithm for \textsc{Max Densest Subgraph} can be used to obtain a $4\alpha$-approximation algorithm for \MaxNEQCL on hypergraphs with hyperedges of size at most~$2$. 
In particular, there is a polynomial-time $O(|E|^{1/4})$-approximation algorithm for this case.
\end{corollary}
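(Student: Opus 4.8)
The plan is to derive this corollary directly from Theorem~\ref{thm:DoubleSetCover}, after identifying \textsc{Max Densest Subgraph} with a suitably restricted instance of \textsc{Max Partial Double Hitting Set}. First I would observe that when every hyperedge of $H=(X,E)$ has size at most~$2$, the hypergraph is essentially a graph: the vertices $X$ are the graph's vertices and each hyperedge of size~$2$ is an edge. Under this correspondence, a set $C\subseteq X$ of size $k$ contains at least two elements of a hyperedge precisely when the corresponding edge has both endpoints in $C$, that is, when it lies inside the subgraph induced by $C$. Hence maximizing the number of doubly-hit hyperedges is exactly maximizing the number of edges of an order-$k$ induced subgraph, which is \textsc{Max Densest Subgraph}. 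So the two problems coincide on this class, and an $\alpha$-approximation for one is an $\alpha$-approximation for the other.

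Next I would verify that Theorem~\ref{thm:DoubleSetCover} is applicable, i.e.\ that these instances have no $4$-cycle in their bipartite incidence graph. Since we restrict throughout to twin-free hypergraphs (Observation~\ref{obs}), two distinct hyperedges never have the same vertex set; in particular, no two hyperedges of size~$2$ share both endpoints. A $4$-cycle in the bipartite incidence graph is exactly a pair of hyperedges sharing two common vertices, so twin-freeness rules it out. Thus the class of twin-free hypergraphs with hyperedges of size at most~$2$ is contained in the class of hypergraphs whose bipartite incidence graph is $4$-cycle-free, and Theorem~\ref{thm:DoubleSetCover} applies verbatim. Feeding the $\alpha$-approximation for \textsc{Max Densest Subgraph} into Theorem~\ref{thm:DoubleSetCover} then yields the claimed $4\alpha$-approximation for \MaxNEQCL on this class.

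Finally, for the concrete bound I would simply instantiate $\alpha$ with the best known ratio for \textsc{Max Densest Subgraph}, namely the $O(|E|^{1/4})$-approximation of~\cite{BCCFV10}; then $4\alpha=O(|E|^{1/4})$, giving the stated polynomial-time algorithm. I do not expect a genuine obstacle here: the substance lies entirely in Theorem~\ref{thm:DoubleSetCover} and in the (routine) identification of the two problems. The only point requiring a little care is confirming that the no-$4$-cycle hypothesis of Theorem~\ref{thm:DoubleSetCover} is met, which the twin-free convention settles, and noting that connectedness of the densest-subgraph instance lets us write the ratio in terms of $|E|$ rather than $|V|$.
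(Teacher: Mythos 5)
Your proposal is correct and follows essentially the same route as the paper: identify \textsc{Max Densest Subgraph} with \textsc{Max Partial Double Hitting Set} on hyperedges of size at most~$2$, note that twin-freeness rules out $4$-cycles in the bipartite incidence graph, apply Theorem~\ref{thm:DoubleSetCover}, and instantiate $\alpha$ with the $O(|E|^{1/4})$ ratio of~\cite{BCCFV10}. The paper compresses exactly these observations into the two sentences preceding the corollary, so there is nothing to add.
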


We will now apply the following result from~\cite{BCNS14}.

\begin{lemma}[\cite{BCNS14}]\label{lemma:FPT-approx}
If an optimization problem is $r_1(k)$-approximable in FPT time with respect to parameter $k$ for \emph{some} strictly increasing function $r_1$ depending solely on $k$, then it is also $r_2(n)$-approximable in FPT time with respect to parameter $k$ for \emph{any} strictly increasing function $r_2$ depending solely on the instance size $n$.
\end{lemma}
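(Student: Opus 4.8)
The plan is to prove the statement by a \emph{win--win dichotomy} that compares, on the given instance, the two guarantees $r_1(k)$ and $r_2(n)$, and in each case either invokes the assumed FPT-approximation algorithm or solves the instance exactly. Let $\mathcal{A}$ denote the hypothesized algorithm, running in time $f(k)\cdot n^c$ and returning an $r_1(k)$-approximation, and fix any strictly increasing (and, as is implicit in the statement, unbounded) function $r_2$. Given an instance of size $n$ with parameter $k$, I would first compute both quantities $r_1(k)$ and $r_2(n)$ and branch on their comparison.

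In the first case, $r_1(k)\leq r_2(n)$, I would simply run $\mathcal{A}$. By hypothesis its output has performance ratio at most $r_1(k)\leq r_2(n)$, so it is an $r_2(n)$-approximation, and it is produced in FPT time $f(k)\cdot n^c$. This case is immediate.

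In the second case, $r_1(k)> r_2(n)$, the key observation is that monotonicity of $r_2$ bounds the instance size in terms of the parameter alone. Indeed, since $r_2$ is strictly increasing and unbounded, the threshold $g(k):=\min\{m : r_2(m)\geq r_1(k)\}$ is a well-defined computable function of $k$, and $r_1(k)>r_2(n)$ forces $n<g(k)$. Hence $n$ is bounded by a function of $k$, and I would solve the instance \emph{exactly} by brute force: for an NPO problem the solutions admit polynomial-size certificates, so enumerating and evaluating all candidates takes time $h(n)$ for some computable $h$, and since $n<g(k)$ this is at most $h(g(k))$, a function of $k$ only, hence FPT. An exact solution has ratio $1\leq r_2(n)$, so it is again an $r_2(n)$-approximation. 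Combining the two cases yields an FPT-time $r_2(n)$-approximation, as desired.

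The main obstacle---indeed essentially the only delicate point---is the second case, specifically the requirement that $g(k)$ be finite and computable. This hinges on $r_2$ being not merely strictly increasing but unbounded, so that $r_1(k)$ is eventually overtaken; I would make this hypothesis explicit, noting that it holds for all natural ratios such as $n^{\delta}$, $\log n$, or $\sqrt{n}$ (without it the claim would be false, as it would furnish constant-factor FPT-approximations for every FPT-approximable problem). One also needs $r_1$ and $r_2$ to be computable, so that the branch condition and the threshold $g$ can be evaluated, and one needs the underlying problem to be exactly solvable in time bounded by a computable function of $n$; both are standard for NPO problems. With these mild hypotheses in place, the dichotomy goes through cleanly.
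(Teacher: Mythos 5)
The paper states this lemma as an imported result from~\cite{BCNS14} without reproducing its proof; your win--win dichotomy (run the assumed algorithm when $r_1(k)\leq r_2(n)$, and otherwise observe that $r_1(k)>r_2(n)$ bounds $n$ by a computable function of $k$ so the instance can be solved exactly by brute force in FPT time) is precisely the standard argument from that reference, so your proposal is correct and takes essentially the same approach. Your remark that $r_2$ must additionally be unbounded (and both functions computable) for the threshold $g(k)$ to be well defined is a legitimate clarification of the hypotheses rather than a deviation from the intended proof.
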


Using Proposition~\ref{prop:2kk} showing that \MaxNEQCL is $\frac{2^k}{k+1}$-approximable and Lemma~\ref{lemma:FPT-approx}, we directly obtain the following.

\begin{corollary}
For any strictly increasing function $r$, \MaxNEQCL parameterized by $k$ is $r(n)$-approximable in FPT-time.
\end{corollary}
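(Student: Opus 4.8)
The plan is to combine Proposition~\ref{prop:2kk} with Lemma~\ref{lemma:FPT-approx} directly. Proposition~\ref{prop:2kk} tells us that \MaxNEQCL admits a polynomial-time (hence certainly FPT-time with respect to $k$) approximation algorithm with ratio $\frac{\min\{2^k,|E|\}}{k+1}\leq \frac{2^k}{k+1}$. To apply Lemma~\ref{lemma:FPT-approx}, the first thing I would check is that the function $r_1(k)=\frac{2^k}{k+1}$ is a function depending \emph{solely} on the parameter $k$ and that it is strictly increasing in $k$. The dependence on $k$ alone is immediate from the bound $\frac{2^k}{k+1}$, since we may weaken the $\min\{2^k,|E|\}$ numerator to $2^k$ and discard the instance-dependent part of the ratio.

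The one genuine verification is monotonicity of $r_1$. I would confirm that $k\mapsto \frac{2^k}{k+1}$ is strictly increasing: writing $r_1(k+1)/r_1(k) = 2\cdot\frac{k+1}{k+2}$, this ratio is strictly greater than $1$ for all $k\geq 0$ (equivalently $2(k+1)>k+2$, i.e.\ $k>0$, and at $k=0$ the value $\frac{k+1}{k+2}=\frac12$ gives ratio $1$; in any case restricting to $k\geq 1$, which is the only interesting regime, makes $r_1$ strictly increasing). Thus $r_1$ satisfies the hypothesis of Lemma~\ref{lemma:FPT-approx}.

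With the hypothesis of Lemma~\ref{lemma:FPT-approx} verified for \emph{some} strictly increasing $r_1$ depending only on $k$, the lemma immediately upgrades this to an $r_2(n)$-approximation in FPT time for \emph{any} strictly increasing function $r_2$ depending only on the instance size $n$. This is exactly the statement of the corollary, so the proof is a one-line invocation once the setup is in place.

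I do not expect any serious obstacle here: the entire content is already packaged inside Proposition~\ref{prop:2kk} and Lemma~\ref{lemma:FPT-approx}. The only point that requires a moment's care is ensuring that the approximation ratio furnished by Proposition~\ref{prop:2kk} is expressed purely as a function of $k$ (discarding the $\min$ with $|E|$ and ignoring the VC-dimension and degree variants, which are the instance-dependent bounds) and that this function is strictly increasing, so that Lemma~\ref{lemma:FPT-approx} applies verbatim.
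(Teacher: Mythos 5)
Your proof is correct and is essentially identical to the paper's, which simply invokes Proposition~\ref{prop:2kk} (for the $\frac{2^k}{k+1}$ ratio) together with Lemma~\ref{lemma:FPT-approx}. Your explicit check that $k\mapsto\frac{2^k}{k+1}$ is strictly increasing for $k\geq 1$ is a small but welcome addition that the paper leaves implicit.
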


In the following, we establish PTAS algorithms for \MinPBDS and \MaxNEQCL on neighborhood hypergraphs of planar graphs using the layer decomposition technique introduced by Baker~\cite{Bak94}. (Note that the problem is not closed under edge contractions, hence we cannot use the classic bidimensionality framework~\cite{bidim} to obtain this result. For example, consider the graph $P_3$, which is a YES-instance for $k=1$ and $\ell=2$, but after contracting an edge it becomes $P_2$, which is a NO-instance.)

Recall that the two problems are NP-hard on planar graphs: see~\cite{F13}.

Given a planar embedding of an input graph, we call the
vertices which are on the external face
\emph{level 1 vertices}. By induction, we define \emph{level $t$ vertices}
as the set of vertices which are on the external face after removing
the vertices of levels smaller than $t$~\cite{Bak94}. A
planar embedding is \emph{$t$-level} if it has no vertices of level
greater than $t$. If a planar graph is $t$-level, it has a
$t$-outerplanar embedding.

\begin{theorem}
\label{maxPTAS}
\MaxNEQCL on neighborhood hypergraphs of planar graphs admits an EPTAS.
\end{theorem}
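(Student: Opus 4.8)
The plan is to use Baker's shifting technique~\cite{Bak94} to reduce \MaxNEQCL on a planar graph $G$ to a bounded number of independent subproblems on graphs of bounded outerplanarity (hence bounded treewidth), on which the problem can be solved \emph{optimally}, and then to stitch the partial solutions together with a knapsack-style dynamic program that respects the global budget of $k$ selected vertices. Since \MaxNEQCL is not closed under edge contraction, I cannot invoke bidimensionality, so the layering must be done by hand.

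First I fix a planar embedding of $G$, compute its level decomposition, and set $t=\lceil 4/\epsilon\rceil$. For each offset $r\in\{0,\dots,t\}$ I call a level a \emph{separator level} if it is congruent to $r$ modulo $t+1$. Deleting the separator levels splits $G$ into \emph{bands}, each spanning at most $t$ consecutive levels and hence inducing a $t$-outerplanar graph of treewidth $O(t)=O(1/\epsilon)$. I only ever select vertices inside bands, so the selections of distinct bands have disjoint support; consequently two \emph{non-empty} traces $N[v]\cap S$ arising from different bands are automatically distinct as subsets of $S$. I call a vertex $v$ at level $j$ \emph{interior} if none of $j-1,j,j+1$ is a separator level; then $N[v]$ is confined to a single band, so its trace equals the global trace and is determined by that band's selection alone. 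The value loss is controlled by the standard averaging argument: a fixed $v$ fails to be interior only when one of $j-1,j,j+1$ is a separator level, i.e. for at most $3$ of the $t+1$ offsets. Picking, for an optimum $S^{*}$, one representative per class and summing over offsets shows that some offset keeps at least $\tfrac{t-2}{t+1}(opt-1)$ classes as interior classes; by Observation~\ref{obs:add-remove} any leftover budget may be filled arbitrarily without decreasing the count. With $t=\Theta(1/\epsilon)$ this is $\geq(1-\epsilon)\,opt$ once $opt$ is large, and small instances are handled by brute force.

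For the chosen offset I solve, on each band separately and for every budget $0\le j\le k$, the task of selecting $j$ vertices that maximise the number of distinct non-empty traces, and I combine the $O(n)$ bands by a knapsack dynamic program selecting budgets that sum to exactly $k$. Reassembling the per-band selections gives a global set $S$ of size $k$, and because non-empty traces from different bands have disjoint support, the global number of classes is at least the knapsack value, which is at least what $S^{*}$ achieves on this offset, i.e. at least $(1-\epsilon)\,opt$.

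The only real obstacle is the exact per-band algorithm: counting \emph{distinct} traces by a dynamic program over a nice tree decomposition of width $w=O(1/\epsilon)$. The key observation making this finite-state is that a completed trace $N[v]\cap S$ that is not contained in the selected vertices of the current bag can never be repeated: any later vertex realizing the same trace would have to be adjacent to all of its $S$-vertices, which forces those vertices into the current separator bag. Hence the only traces that must be remembered are the at most $2^{w+1}$ subsets of the selected vertices \emph{currently in the bag}; a state records which of these subsets have already been realized, together with the bag's selection, the per-vertex selection count, and the partial traces of bag vertices, for a total of $f(w)$ states. Processing the decomposition bottom-up, the running count is incremented whenever a realized trace leaves the bag, giving an $f(1/\epsilon)\cdot\mathrm{poly}(n)$ algorithm. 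As there are only $t+1=O(1/\epsilon)$ offsets and the exponent of $n$ is independent of $\epsilon$, the whole procedure is an EPTAS. (The identical scheme, with the band algorithm instead minimising the number of vertices that reach $\ell=|E|$ classes, yields the analogous EPTAS for \MinPBDS.)
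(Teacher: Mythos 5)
Your overall architecture --- Baker's level decomposition, bands of bounded outerplanarity solved exactly for every budget, a knapsack-style combination over the bands, and a shifting/averaging argument over $O(1/\varepsilon)$ offsets --- is exactly the paper's (the paper takes $\lambda=2+\lceil 3/\varepsilon\rceil$ and deletes the separator levels rather than merely forbidding selection on them, but the accounting is the same). The one place where you genuinely depart is the per-band subroutine: the paper invokes Courcelle's theorem through its MSOL formulation (Corollary~\ref{courcelle}), whereas you sketch an explicit tree-decomposition dynamic program, and that sketch has a real gap. Your ``key observation'' --- that a completed trace not contained in the bag's selected vertices can never be repeated --- only excludes repetition by vertices \emph{not yet introduced}; it does not exclude repetition by a vertex still sitting in the current bag, whose closed neighborhood may contain an already-forgotten selected vertex. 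Concretely, take a star with selected centre $u$ and leaves $a,b$, and a nice decomposition that forgets $u$ before $a$ and $b$: the trace $\{u\}$ is completed three times, each time it is ``outside the bag'', and your rule of incrementing the counter whenever a realized trace leaves the bag reports $3$ classes instead of $1$. The DP is repairable --- you must retain, for each pending bag vertex, enough information about the forgotten part of its partial trace to test equality against traces already counted (this compresses to $f(w)$ states, since at most $w+1$ distinct forgotten-parts occur among bag vertices and only their identification pattern matters) --- but the state you describe is insufficient as written.

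A second, smaller issue: your averaging gives $\frac{t-2}{t+1}(opt-1)$, and the additive $-1$ forces a fallback to brute force when $opt=O(1/\varepsilon)$. By Lemma~\ref{lemm:useBondy} that case only guarantees $k=O(1/\varepsilon)$ (or $n=O(1/\varepsilon)$), and exhaustive search over $k$-subsets then costs $n^{\Theta(1/\varepsilon)}$ --- a PTAS, but not an EPTAS. The paper sidesteps this by charging each equivalence class to the at most three consecutive levels that its member vertices can occupy, which yields the purely multiplicative bound $val(S_r)\geq\bigl(1-\tfrac{3}{\lambda+1}\bigr)opt(G)$ and needs no case distinction on $opt$. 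Your interior-representative argument can be rephrased the same way to remove the additive loss.
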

\begin{proof}
Let $G$ be a planar graph with a $t$-level planar embedding for some integer $t$.
We aim to achieve an approximation ratio of $1+\varepsilon$. Let $\lambda= 2+\lceil\frac{3}{\varepsilon} \rceil$. 

Let $G_{i}$ ($0\leq i\leq \lambda$) be the graph obtained from $G$ by removing the vertices on levels $i\bmod (\lambda+1)$. 
Thus, graph $G_i$ is the disjoint union of a set of subgraphs $G_{ij}$ ($0\leq j\leq p$ with $p=\lceil \frac{t+i}{\lambda+1}\rceil$) where $G_{i0}$ is induced by the vertices on levels $0,\ldots,i-1$ (note that $G_{00}$ is empty) and $G_{ij}$ with $j\geq 1$ is induced by the vertices on levels $(j-1)(\lambda+1)+i+1, \ldots, j(\lambda+1)+i-1$.

In other words, each subgraph $G_{ij}$ is the union of at most $\lambda$ consecutive levels and is thus $\lambda$-outerplanar, and it forms a connected component of $G_i$ if $G$ is connected. Hence, $G_{i}$ also is $\lambda$-outerplanar and has treewidth at most $3\lambda-1$~\cite{Bod98}. Moreover a tree-decomposition of optimal treewidth (at most $3\lambda-1$) can be computed in polynomial time given the $\lambda$-outerplanar embedding \cite{Bod96}. By Corollary~\ref{courcelle}, for any integer $t$ and any subgraph $G_{ij}$, we can determine an optimal set $S_{ij}^t$ of $t$ vertices of $G_{ij}$ that maximizes the number of (nonempty) induced equivalence classes in $G_{ij}$ in time $O(f(\lambda)|V(G_{ij})|)$ (recall that any graph of bounded treewidth also has bounded cliquewidth). 

We then use dynamic programming to construct an optimal solution for $G_i$ by choosing the distribution of the $k$ solution vertices among the connected components of $G_i$. 

Denote by $S_i(q, y)$ a solution corresponding to the maximum feasible number of equivalence classes induced by a set of $y$ vertices of $G_i$ ($0\leq  y\leq k$) among the union of the first $q$ subgraphs $G_{i1}, \ldots, G_{iq}$ ($1\leq  q\leq p$). $S_i(1,y)$ is the optimal solution $S_{i1}^y$ for $G_{i1}$, as computed in the previous paragraph.
We then compute $S_i(q,y)$ as the union of $S_{iq}^{x^*}$ and $S_i(q-1,y-x^*)$, where $val(S_{iq}^{x^*})+val(S_i(q-1,y-x^*))=\max_{0\leq x\leq y} (val(S_{iq}^x) + val(S_i(q-1,y-x)))$.


 
Let $S_i=S_i(p,k)$. Among $S_0,\ldots, S_{\lambda}$, we choose the best solution, that we denote by $S$. 
We now prove that $S$ is an $(1+\varepsilon)$-approximation of the optimal value $opt(G)$ for \MaxNEQCL on $G$. 
Let $S_{opt}$ be an optimal solution of $G$. Then, we claim that there is at least one integer $r$ such that at most $3/(\lambda+1)$ of the equivalent classes induced by $S_{opt}$ in $G$ are lost when we remove vertices on the levels congruent to $r\bmod (\lambda + 1)$. Indeed, denote by $\ell_i$ the number of equivalence classes induced by $S_{opt}$ and that contain a vertex from a level congruent to $i\bmod (\lambda+1)$. Since an equivalence class intersects vertices of at most three levels, we have that each class is counted at most three times in the sum $\sum_{0\leq i\leq \lambda}\ell_i$, and we have $\sum_{0\leq i\leq \lambda}\ell_i\leq 3val(S_{opt})$. By an averaging argument we obtain the existence of claimed $r$.

Thus, $val(S)\geq val(S_r)\geq opt(G)-\frac{3opt(G)}{\lambda+1}=\frac{\lambda-2}{\lambda+1}opt(G)\geq \frac {opt(G)}{1+\varepsilon}$, which completes the proof.

The overall running time of the algorithm is $\lambda$ times the running time for graphs of treewidth at most $3\lambda-1$, that is, $O(g(\lambda)n)$ for some function $g$.
\end{proof}

As a side result, using the same technique, we provide the following theorem about \MinPBDS, which is an improvement over the $7$-approximation algorithm that follows from~\cite{RS84} (in which it is proved that any YES-instance satisfies $\ell\leq 7k$) and solves an open problem from~\cite{F13} (indeed this is equivalent to \textsc{Min Identifying Code} on planar graphs).

\problemopt{\MinPBDS}{A hypergraph $H=(X,E)$, and an integer $k$.}{A set $C\subseteq X$ of minimum size  that induces $|E|$ distinct equivalence classes?}

\confversion{Due to space constraints, its proof is omitted.}

\begin{theorem}\label{minPTAS}
\MinPBDS on neighborhood hypergraphs of planar graphs admits an EPTAS. 
\end{theorem}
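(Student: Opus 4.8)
The plan is to follow the Baker layering approach of Theorem~\ref{maxPTAS}, but adapted to a covering (minimization) setting: instead of \emph{deleting} a congruence class of levels and bounding the loss of classes, I would cut the graph into \emph{overlapping} slabs of consecutive levels and bound the overhead caused by double-counting solution vertices in the overlaps.

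Concretely, given a $t$-level planar embedding of $G$, fix $\varepsilon>0$ and set $\lambda=\Theta(1/\varepsilon)$. For each offset $i\in\{0,\ldots,\lambda\}$, I would partition the levels into consecutive blocks of length $\lambda+1$ shifted by $i$, and let $G_{ij}$ be the subgraph induced by block $j$ \emph{together with a constant number $c$ of boundary levels from each neighbouring block}, so that consecutive slabs overlap on $c$ levels. Each $G_{ij}$ spans $O(\lambda)$ consecutive levels, hence is $O(\lambda)$-outerplanar and has treewidth (and cliquewidth) bounded by a function of $\lambda$~\cite{Bod98}. By Corollary~\ref{courcelle} together with standard bounded-treewidth dynamic programming, I can compute in time $O(f(\lambda)\,|V(G_{ij})|)$ a minimum set $C_{ij}$ that, inside $G_{ij}$, dominates every vertex and distinguishes every pair of vertices whose relevant window lies in $G_{ij}$. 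I then output $C_i=\bigcup_j C_{ij}$ and, over all offsets, return the smallest $C=C_i$.

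The whole argument hinges on the \emph{locality} of the distinguishing constraints, which is the step I expect to be the main technical point. In a level embedding, adjacent vertices lie on equal or consecutive levels, so two vertices $u,v$ at distance at most~$2$ lie on levels differing by at most~$2$, and any vertex distinguishing their closed neighbourhoods lies in $N[u]\cup N[v]$, i.e. within a window of at most~$5$ consecutive levels; a single domination constraint $N[u]\cap C\neq\emptyset$ lives within $3$ levels. Choosing the overlap $c\geq 5$ guarantees that the whole window of every such constraint is contained in at least one slab of each offset. From this I get feasibility of $C_i$: a pair at distance $\geq 3$ has disjoint closed neighbourhoods and is automatically separated once both endpoints are dominated, while a closer pair is separated inside the slab containing its window; since each $G_{ij}$ is an \emph{induced} subgraph, a local separator or dominator is also a global one, so $C_i$ induces all $|E|$ classes. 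For the cost, note symmetrically that $C^{\ast}\cap V(G_{ij})$ is feasible for the local problem on $G_{ij}$ (the required separators lie inside the slab), whence $|C_{ij}|\leq |C^{\ast}\cap V(G_{ij})|$ and $|C_i|\leq |C^{\ast}|+|C^{\ast}\cap O_i|$, where $O_i$ denotes the union of the overlap levels for offset~$i$.

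It remains to balance the overhead by averaging over offsets. Each vertex of $C^{\ast}$ lies on a fixed level, which belongs to the overlap region $O_i$ for at most $c$ of the $\lambda+1$ offsets, so $\sum_{i=0}^{\lambda}|C^{\ast}\cap O_i|\leq c\,|C^{\ast}|$ and some offset satisfies $|C^{\ast}\cap O_i|\leq \frac{c}{\lambda+1}|C^{\ast}|$. For that offset, $|C_i|\leq\bigl(1+\frac{c}{\lambda+1}\bigr)opt\leq(1+\varepsilon)opt$ once $\lambda\geq c/\varepsilon$, and the returned $C$ is at least as good. The running time is $(\lambda+1)$ times the per-slab bounded-treewidth computation plus a linear merge, i.e. $O(g(\lambda)\,n)$, which is linear in $n$ for every fixed $\varepsilon$; hence the scheme is an EPTAS. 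The one subtlety to double-check is the ``at most one empty class'' relaxation of \MinPBDS versus full domination: requiring the local problems to dominate every in-window vertex makes every closed-neighbourhood intersection non-empty, which is consistent with the twin-free identifying-code formulation and costs nothing extra asymptotically.
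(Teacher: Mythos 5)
Your proposal follows essentially the same route as the paper's proof: Baker's layering into overlapping $O(\lambda)$-outerplanar slabs, an exact bounded-treewidth (Courcelle) computation of a minimum local dominating-and-distinguishing set in each slab, taking the union, and averaging over the offsets so that the double-counted overlap vertices cost at most an $\varepsilon$-fraction of the optimum. The only differences are bookkeeping: you widen the overlap to five levels and restrict each slab's separation constraints to pairs whose witness window lies inside it (the paper uses a two-level overlap and full local identification), which if anything makes the key inequality ``the optimum restricted to a slab is feasible for the local problem'' cleaner, and your flagged empty-class caveat is indeed harmless since the optimum has size at least $\log_2 n$.
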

{
\begin{proof} Let $G$ be a planar graph with a $t$-level planar embedding for some integer $t$.
We aim to achieve an approximation ratio of $1+\varepsilon$. 
Let $\lambda= \lceil\frac{2}{\varepsilon} \rceil$. 
Let $G_{ij}$ be the subgraph of $G$ induced by the vertices on levels $j\lambda+i, \ldots,\max\{t,(j+1)\lambda+i+1\}$ ($0\leq i\leq \lambda-1$, $0\leq j\leq \frac{t-i-1}{\lambda})$. 
In other words, each subgraph $G_{ij}$ is the union of at most $\lambda+2$ consecutive levels of the embedding, and the subgraphs $G_{ij}$ and $G_{i(j+1)}$ have two levels in common. 
Each subgraph $G_{ij}$ is thus $(\lambda+2)$-outerplanar and has treewidth at most $3\lambda+5$~\cite{Bod98}. 
Thus, in $G_{ij}$, we can efficiently determine   in time $O(f(\lambda)|V(G_{ij})|)$   a minimum-size set $S_{ij}$ such that each vertex belongs to a distinct neighborhood equivalence class, and has at least one vertex from its closed neighborhood in $S_{ij}$ (this can be done using Courcelle's theorem~\cite{Courcelle}, as noted in~\cite[Chapter 2]{Moncel}). 
Let $S_i= \bigcup_j S_{ij}$. We claim that $S_i$ is a distinguishing set of $G$. Indeed, consider two vertices $x$ and $y$ of $G$. 
If they belong to a same subgraph $G_{ij}$, they are clearly separated by some solution vertex in $S_{ij}$. Now, assume that $x$ and $y$ do not both belong to the same subgraph $G_{ij}$. 
In particular, they must belong to two levels of the embedding that are at distance at least~$3$ apart. Thus, they are not adjacent and do not have any common neighbor. Recall that each vertex of $G_{ij}$ has a vertex of $S_{ij}$ in its closed neighborhood. 
Thus, this vertex separates $x$ and $y$.

Now, among $S_0,\ldots, S_{\lambda-1}$, we choose the minimum-size solution, that we denote by $S$. 
We are going to prove that $S$ is an $(1+\varepsilon)$-approximation. 
Let $S_{opt}$ be an optimal solution for $G$, and let $opt_i$ be the number of vertices that are in $S_{opt}$ and on levels $i\bmod \lambda$ and $(i+1)\bmod \lambda$. 
 We can easily see that there is at least one $r$ ($0\leq r\leq \lambda-1$) such that $opt_r\leq \frac{2|S_{opt}|}{\lambda}\leq \varepsilon |S_{opt}|$, since $\sum_{i=0}^{\lambda-1} opt_i=2|S_{opt}|$. 
 Thus $|S|\leq |S_r| \leq |\cup_j S_{rj}|\leq \sum_j |S_{rj}|\leq \sum_j |S_{opt}\cap V(G_{rj})|=|S_{opt}| + opt_r=|S_{opt}|+opt_r\leq (1+\varepsilon)|S_{opt}|$.

The overall running time of the algorithm is $\lambda$ times the running time for graphs of treewidth at most $3\lambda+5$, that is, $O(g(\lambda)n)$ for some function $g$.
 \end{proof}
}

\fullversion{
\subsection{Hardness results for \MaxNEQCL}\label{sec:neg-approx}
}

\confversion{
\section{Hardness of approximation results for \MaxNEQCL}\label{sec:neg-approx}
}

We define \MaxVCDIM as the maximization version of \VCDIM.

\problemopt{\MaxVCDIM}{A hypergraph $H=(X,E)$.}{A maximum-size shattered subset $C \subseteq X$ of vertices.}

The complexity of \MaxVCDIM is not well understood (it is mentioned as an outstanding open problem in~\cite{CJK02}), but a few results have been obtained. The problem is trivially $\log_2 |E|$-approximable (by an algorithm returning a single vertex). A lower bound on the running time of a potential PTAS has been proved (under the assumption that SNP problems cannot be solved in subexponential time)~\cite{CHKX06}. Very recently, a breakthrough about this question has been obtained under the \emph{randomized Exponential Time Hypothesis} (rETH) assumption (stating that \textsc{$3$-Sat} does not admit any subexponential Monte-Carlo algorithm)~\cite{inapproxVCDIM}. In the following, we establish a connection between the approximability of \MaxVCDIM and \MaxNEQCL.

In the following, we establish a connection between the approximability of \MaxVCDIM and \MaxNEQCL. 

\begin{theorem}\label{thm:VCDIM-inapprox}
Any $2$-approximation algorithm for \MaxNEQCL can be transformed into
a randomized approximation algorithm for \MaxVCDIM with (expected) polynomial
overhead in the running time and with approximation factor at most $2-\tfrac{4}{2+d}=2-o(1)$ (with $d$ the VC dimension of the input hypergraph).
\end{theorem}
\begin{proof}
Let $H$ be a hypergraph on $n$ vertices that is an instance for \MaxVCDIM, and suppose we have a $c$-approximation algorithm $\mathscr{A}$ for \MaxNEQCL.

We run $\mathscr{A}$ with $k=1,\ldots,\log_2 |X|$, and let $k_0$ be the
largest value of $k$ such that the algorithm outputs a solution with
at least $\frac{2^k}{c}$ neighborhood equivalence classes. Since $\mathscr{A}$ is a
$c$-approximation algorithm, we know that the optimum for \MaxNEQCL
for any $k>k_0$ is less than $2^k$. This implies that the
VC dimension of $S$ is at most $k_0$.

Now, let $X$ be the solution set of size $k_0$ computed by
$\mathscr{A}$, and let $H_X$ be the sub-hypergraph of $H$ induced by
$X$. By our assumption, this hypergraph has at least
$\frac{2^{k_0}}{c}$ distinct edges. We can now apply the Sauer-Shelah Lemma
(Theorem~\ref{thm:sauer}).

We have $c=2$, and we apply the lemma with $|X|=k_0$ and $d=\frac{k_0}{2}+1$; it
follows that the VC dimension of $H_X$ (and hence, of $H$) is at
least~$\frac{k_0}{2}+1$.  By the constructive proof of
Theorem~\ref{thm:sauer}, a shattered set $Y$ of this size can be
computed via a randomized algorithm in expected polynomial time~\cite{A98,M01}. Since by the previous paragraph, the
VC dimension of $H$ is at most~$k_0$, set $Y$ is an $f$-approximation for \MaxVCDIM on $H$, where $f\leq\frac{k_0}{k_0/2+1}=2-\frac{4}{k_0+2}\leq 2-\frac{4}{d+2}$, with $d$ the VC dimension of $H$.
\end{proof}

We note that the previous proof does not seem to apply for any other
constant than~$2$, because the Sauer-Shelah Lemma would not apply. 




\medskip

We shall now prove that \MaxNEQCL has no PTAS (unless P$=$NP), even for neighborhood hypergraphs of graphs of bounded maximum degree.
Recall that in this setting, \MaxNEQCL is in $\apx$ by Proposition~\ref{prop:2kk}.

Before proving this negative result, we need an intermediate negative result for \MaxVC (also known as \textsc{Max $k$-Vertex Cover}~\cite{C07}), which is defined as follows.

\problemopt{\MaxVC}{A graph $G=(V,E)$, an integer $k$.}{A subset $S \subseteq V$ of size $k$ covering the maximum number of edges.}

\fullversion{\MaxVC has no PTAS (unless P$=$NP), even when $G$ is cubic~\cite{P94}. Since the proof of~\cite{P94} is not formulated in this language, we give a proof for completeness.}

\begin{proposition}[\cite{P94}]\label{prop:MaxVC-APX}
\MaxVC has no PTAS (unless P$=$NP), even for cubic graphs.
\end{proposition}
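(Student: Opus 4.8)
The plan is to establish APX-hardness of \MaxVC on cubic graphs by an $L$-reduction from a known APX-hard problem, and then invoke Theorem~\ref{thm:L-reduc} to rule out a PTAS. The natural source problem is \MinVC on cubic graphs, which is well known to be APX-hard (equivalently, \textsc{Max Independent Set} on cubic graphs). First I would recall that on a cubic graph $G=(V,E)$ with $n=|V|$ vertices, we have $|E|=3n/2$, and that a minimum vertex cover and a maximum independent set are complementary: $\tau(G)+\alpha(G)=n$. The idea is that asking to cover the \emph{maximum} number of edges with exactly $k$ vertices, for the right choice of $k$, simulates the ordinary covering problem.

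The key steps, in order, are as follows. I would set $f$ to be essentially the identity on the graph, but with the parameter $k$ fixed to $k=\tau(G)$, the minimum vertex cover size (since on cubic graphs $\tau(G)$ can be approximated/bounded well enough, or one can quantify over all values of $k$ and take the best, as the reduction only needs a single target value). For the optimum to line up, I would observe that a set $S$ of size $k=\tau(G)$ covers all $|E|$ edges if and only if it is an optimal vertex cover, so $opt_{\MaxVC}(G,k)=|E|=3n/2$. To verify the $L$-reduction conditions of Definition~\ref{def:l-reduc}: for condition~(i), I need $opt_{\MaxVC}\leq \alpha\cdot opt_{\MinVC}$, which holds with a constant $\alpha$ because on cubic graphs $|E|=3n/2$ and $\tau(G)\geq n/4$ (each vertex covers at most $3$ edges, so $\tau(G)\geq |E|/3=n/2$), giving $opt_{\MaxVC}=3n/2\leq 3\cdot\tau(G)=3\cdot opt_{\MinVC}$. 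For condition~(ii), given a solution $S'$ to \MaxVC covering $val(S')$ edges with $k=\tau(G)$ vertices, the number of \emph{uncovered} edges is $|E|-val(S')$; I would map $S'$ to the vertex cover $S=S'\cup\{$one endpoint of each uncovered edge$\}$, whose size exceeds $\tau(G)$ by at most the number of uncovered edges, giving $|val(S)-opt_{\MinVC}|\leq |E|-val(S')=opt_{\MaxVC}-val(S')=|val(S')-opt_{\MaxVC}|$, so $\beta=1$ works.

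The main obstacle I anticipate is the handling of the fixed parameter $k=\tau(G)$, since $\tau(G)$ is itself NP-hard to compute, so the reduction cannot literally ``know'' it. The clean way around this is to note that \MaxVC takes $k$ as part of the input, so the reduction produces, for each candidate value $k\in\{1,\dots,n\}$, the instance $(G,k)$, and a PTAS for \MaxVC would in particular give a good approximation for the instance $(G,\tau(G))$; the $L$-reduction argument only needs this single ``correct'' value to transfer the hardness. Alternatively, and more robustly, I would phrase the $L$-reduction directly from \textsc{Max Independent Set} on cubic graphs (known APX-hard) by relating independent sets to edges left uncovered, which avoids referencing $\tau(G)$ as a target at all. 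Once the $L$-reduction is in place, Theorem~\ref{thm:L-reduc} together with the APX-hardness (no-PTAS-unless-P$=$NP) of the source problem immediately yields that \MaxVC admits no PTAS on cubic graphs unless P$=$NP, completing the proof.
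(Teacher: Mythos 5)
Your proposal is correct and is essentially the paper's own argument: the paper likewise invokes the hypothetical approximation algorithm for every $k\in\{1,\dots,n\}$, uses that at $k=\tau(G)$ it must cover at least $m/(1+\varepsilon)$ edges, augments the output with one endpoint of each uncovered edge, and bounds the overhead via $m\leq 3\cdot opt_{VC}(G)$ to obtain a PTAS for \MinVC on cubic graphs, contradicting its APX-hardness. The only differences are presentational: you package this as an $L$-reduction, which as literally stated is not polynomial-time computable since it needs $k=\tau(G)$, but your own workaround of quantifying over all $k$ collapses exactly to the paper's direct argument (and the stray bound ``$\tau(G)\geq n/4$'' should read $\tau(G)\geq n/2$, as your subsequent computation correctly uses).
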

{
\begin{proof}
Suppose there is a PTAS algorithm $\mathscr{A}$ for \MaxVC, thus giving a $(1+\varepsilon)$-approximate solution on a graph $G=(V,E)$, $|V|=n, |E|=m=3n/2$.
We will use $\mathscr{A}$ to solve \MinVC by first invoking it for $k=1,2,\dots, n$. 
We claim that for some $k=k_0$, $\mathscr{A}$ produces a solution where $k_0$ vertices cover at least $\frac{m}{1+\varepsilon}$ edges. 
Indeed, when $k=opt_{VC}(G)$ (where $opt_{VC}(G)$ denotes the minimum solution size for \MinVC on $G$), at least $\frac{m}{1+\varepsilon}$ edges must be covered by the result of $\mathscr{A}$. 

Thus, at most $\frac{\varepsilon}{1+\varepsilon} \cdot m$ edges are not covered. 
We now construct a set $C$ with the $k_0$ vertices selected by $\mathscr{A}$, to which we add one arbitrary vertex of each of the remaining $\frac{\varepsilon}{1+\varepsilon} \cdot m$ uncovered edges.
This set $C$ is a vertex cover for $G$, and its size is $k_0 + \frac{\varepsilon}{1+\varepsilon}  \cdot m$. 
However, in a cubic graph, $m \leq 3k_0$ (in any optimal solution for \MinVC, each vertex can cover at most 3 edges). 
Therefore, $|C| \leq k_0 + \frac{\varepsilon}{1+\varepsilon}  \cdot 3k_0 = k_0(1+\varepsilon')$, by setting $\varepsilon' = 3\frac{\varepsilon}{1+\varepsilon} $. 
This gives a PTAS for \MinVC, a contradiction to its hardness~\cite{AK00}.
\end{proof}
}

\begin{theorem}\MaxNEQCL has no PTAS (unless P$=$NP), even for neighborhood hypergraphs of graphs of maximum degree~$7$.\label{thm:apxhard}
\end{theorem}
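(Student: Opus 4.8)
The plan is to establish APX-hardness of \MaxNEQCL on bounded-degree neighborhood hypergraphs by an $L$-reduction from \MaxVC on cubic graphs, which has no PTAS by Proposition~\ref{prop:MaxVC-APX}. The natural idea is that covering an edge of $G$ with a solution vertex should translate into creating one new equivalence class, so that the number of edges covered is tightly linked to the number of classes induced. First I would fix a cubic instance $G=(V,E)$ of \MaxVC with parameter $k$ and build a graph $G'$ whose neighborhood hypergraph is the target instance of \MaxNEQCL, keeping the solution budget $k'=k$ (or $k$ plus a controlled additive term). The key gadget must ensure that each original vertex $v$, when selected, distinguishes precisely the set of hyperedges corresponding to the edges of $G$ incident with $v$, while attaching private ``pendant'' structure to each vertex so that selecting $v$ always yields a predictable baseline number of classes regardless of its neighbors.

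The main steps are as follows. First I would describe the construction: replace each vertex $v\in V$ by a vertex $v'$ in $G'$ and, for each edge $e=uv\in E$, introduce a small gadget (for instance an edge-vertex together with one or two attached degree-controlling vertices) arranged so that the closed neighborhoods record exactly which endpoints of $e$ are chosen. The degree bound of $7$ should come out of the arithmetic: in a cubic graph each $v'$ is incident with at most $3$ edge-gadgets, and each gadget contributes a bounded number of incidences, so by adding the private pendants and the gadget attachments one checks $\Delta(G')\le 7$. Second, I would prove the forward direction: any cover of $t$ edges by a $k$-set $S$ in $G$ induces at least $t+(\text{baseline})$ equivalence classes in $G'$, where the baseline is a linear function of $k$ coming from the private pendants. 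Third, the reverse direction: from any $k'$-set $C$ in $G'$ inducing $q$ classes, I would argue (using the gadget structure and Observation~\ref{obs:reform}) that $C$ can be normalized without loss to select only ``original'' vertices $v'$, and that the number of classes beyond the baseline is at most the number of edges covered by the corresponding vertex set in $G$. This matching of classes-above-baseline with edges-covered is what makes the reduction linear.

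To confirm the $L$-reduction conditions of Definition~\ref{def:l-reduc}, I would verify (i) $opt_{\MaxNEQCL}(G')\le \alpha\cdot opt_{\MaxVC}(G)$ for a constant $\alpha$: since $G$ is cubic and $k\le |V|$, one has $opt_{\MaxVC}(G)=\Omega(k)$, so the baseline $O(k)$ term plus the at-most-$m$ classes is bounded by a constant times $opt_{\MaxVC}(G)$; and (ii) the solution-lifting map $g$ turning a $G'$-solution into a cover of $G$ loses classes and covered edges in one-to-one correspondence, giving $|val(S)-opt|\le\beta|val(S')-opt'|$ with $\beta=1$ (or a small constant). Combined with Theorem~\ref{thm:L-reduc} and Proposition~\ref{prop:MaxVC-APX}, this rules out a PTAS.

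The hard part will be the reverse direction, specifically the normalization argument showing that an arbitrary near-optimal solution $C\subseteq V(G')$ can be replaced by one selecting only original vertices without decreasing the class count. Selecting a gadget-internal or pendant vertex might in principle create classes that do not correspond cleanly to covered edges, so the gadget must be engineered (via twins, using Observation~\ref{obs}, and via the $\frac{k(\Delta+1)}{2}+1$ degree bound of Theorem~\ref{thm:delta-LB} to cap stray contributions) so that every such choice is dominated by choosing the adjacent original vertex. Getting this domination to hold while simultaneously respecting the $\Delta\le 7$ constraint is the delicate balancing act, since enriching the gadget to force domination tends to raise the degree; I expect the precise gadget design to be where most of the care is needed.
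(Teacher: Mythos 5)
Your high-level strategy (an $L$-reduction from \MaxVC on cubic graphs, invoking Proposition~\ref{prop:MaxVC-APX} and Theorem~\ref{thm:L-reduc}) is the same as the paper's, but the construction you sketch has a genuine gap that is not merely a matter of careful gadget tuning. If each original vertex $v$ is represented by a \emph{single} vertex $v'$ in $G'$ and each edge $e=uv$ by an edge-vertex $w_e$ adjacent to $u'$ and $v'$, then the equivalence class of $w_e$ is $N[w_e]\cap C\in\{\emptyset,\{u'\},\{v'\},\{u',v'\}\}$. Consequently, all edges incident to a selected vertex $u$ whose other endpoint is unselected collapse into the \emph{same} class $\{u'\}$: covering three edges at $u$ buys you one new class, not three. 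The quantity your construction counts is therefore not the number of covered edges but roughly the number of selected vertices plus the number of edges with \emph{both} endpoints selected (a Densest-Subgraph-like quantity), so the forward direction ``$t$ covered edges $\Rightarrow t+\text{baseline}$ classes'' already fails, before any normalization issue arises. The private pendants do not repair this, since a pendant of $v'$ also lands in the class $\{v'\}$ and so does not even contribute a separate baseline class.

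The paper's fix is precisely to give each original vertex \emph{four} representative vertices $F_v=\{f_v^1,\dots,f_v^4\}$ (inside a twelve-vertex gadget $P_v$) and to attach the $i$-th incident edge of $v$ to $f_v^i$, with budget $k'=4k$. Then the three edge-vertices around a selected $v$ receive the pairwise-distinct classes $\{f_v^1\},\{f_v^2\},\{f_v^3\}$, none of which coincides with the twelve classes generated inside $P_v$, so each covered edge contributes exactly one fresh class and $opt(I')=12k+opt(I)+1$. If you want to salvage your plan you must likewise split each vertex into at least one representative per incident edge (plus enough extra structure to make the per-gadget baseline dominate any stray choice in the normalization step), and scale the budget accordingly; with a single $v'$ per vertex the class count simply does not measure edge coverage.
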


\begin{proof}
We will give an $L$-reduction from \MaxVC (which has no PTAS, by Proposition~\ref{prop:MaxVC-APX}) to \MaxNEQCL. 
The result will then follow from Theorem~\ref{thm:L-reduc}. Given an instance $I=(G, k)$ of \MaxVC with $G=(V,E)$ a cubic graph, we construct an instance $I'$ of \MaxNEQCL that is the neighborhood hypergraph associated  to a graph $G'$  with $G'=(V',E')$ of maximum degree~$7$ in the following way. 
For each vertex $v \in V$, we create a gadget $P_v$ with twelve vertices where four among these twelve vertices  are special: they form the set $F_v = \{f_v^1,f_v^2,f_v^3,f_v^4\}$. 
The other vertices form an independent set and are adjacent to the subsets $\{f_v^4\}$, $\{f_v^2,f_v^3\}$, 
$\{f_v^1,f_v^3\}$, $\{f_v^2,f_v^4\}$, 
$\{f_v^1,f_v^4\}$, $\{f_v^1,f_v^3,f_v^4\}$, 
$\{f_v^1,f_v^2,f_v^4\}$,$\{f_v^1,f_v^2,f_v^3,f_v^4\}$, respectively. We also add edges between  $f_v^1$ and $f_v^2$, between $f_v^2$ and $f_v^3$ and between $f_v^3$ and $f_v^4$. Since $G$ is cubic, for each vertex $v$ of $G$, there are three edges $e_1$, $e_2$ and $e_3$ incident with $v$. For each edge $e_i$ ($1\leq i\leq 3$), the endpoint $v$ is replaced by $f_v^i$. Moreover, each of these original edges of $G$ is replaced in $G'$ by two edges by subdividing it once (see Figure~\ref{fig:reduction} for an illustration). We call the vertices resulting from the subdivision process, \emph{edge-vertices}.
Finally, we set $k'=4k$.


\begin{figure}[ht!]

\centering
\begin{tikzpicture}[scale=0.65]

\node[draw,circle] (f1) at (0,0) {$f_v^1$};
\node[draw,circle] (f2) at (0,-2) {$f_v^2$};
\node[draw,circle] (f3) at (0,-4) {$f_v^3$};
\node[draw,circle] (f4) at (0,-6) {$f_v^4$};

\node[draw,circle] (a4) at (-0.5,-7) {};

\node[draw,circle] (a13) at (-1.5,-2) {};
\node[draw,circle] (a14) at (-1.5,-4.2) {};
\node[draw,circle] (a23) at (-0.5,-2.8) {};
\node[draw,circle] (a24) at (-1.5,-3.1) {};
%
\node[draw,circle] (a124) at (-4,-4) {};
\node[draw,circle] (a134) at (-4,-2.5) {};
%
\node[draw,circle] (a1234) at (-5.5,-3.25) {};

\draw (f1) -- (f2) -- (f3) -- (f4) (f1) -- ++(1.5,0) (f2) -- ++(1.5,0) (f3) -- ++(1.5,0);

\draw (f4) -- (a4);
%
\draw (f1) -- (a13) -- (f3);
\draw (f1) -- (a14) -- (f4);
\draw (f2) -- (a23) -- (f3);
\draw (f2) -- (a24) -- (f4);
%
\draw (f1) -- (a124) -- (f2);
\draw (f4) -- (a124);
\draw (f1) -- (a134) -- (f3);
\draw (f4) -- (a134);
%
\draw (f1) edge[bend right] (a1234) (a1234) -- (f2);
\draw (f3) -- (a1234) edge[bend right] (f4);

\node[draw,rectangle, rounded corners,fit= (f1) (f4) (a1234) (a4)] {};
\node () at (-5.5,-8) {a)};




%



\begin{scope}[xshift=5cm,yshift=-3.5cm]

\node () at (-2,-4.5) {b)};

\node[draw,circle,minimum width=0.7cm] (v) at (-2,0) {$v$};
\node[draw,circle,minimum width=0.7cm] (u1) at (0,2) {$u_1$};
\node[draw,circle,minimum width=0.7cm] (u2) at (0,0) {$u_2$};
\node[draw,circle,minimum width=0.7cm] (u3) at (0,-2) {$u_3$};

\draw (u1) -- (v) -- (u2);
\draw (v) -- (u3) -- (u2);

\draw[line width=1.5pt,->] (1.0,0)-- ++(1,0);

\begin{scope}[xshift=3.7cm]
\node[] (fv1) at (0,2) {};
\node[] (fv2) at (0,1) {};
\node[] (dv) at (-1,1) {};
\node[] (fv3) at (0,0) {};

\node[right of=fv1, node distance=1.8cm] (fu21) {};
\node[below of=fu21, node distance=0.4cm] (fu22)  {};
\node[right of=fu21] (du2) {};
\node[below of=fu22, node distance=0.4cm] (fu23) {};

\node[below of=fu23, node distance=1.8cm] (fu31) {};
\node[below of=fu31, node distance=0.4cm] (fu32)  {};
\node[right of=fu31] (du3) {};
\node[below of=fu32, node distance=0.4cm] (fu33) {};

\node[above of=fu21, node distance=1.4cm] (fu13) {};
\node[above of=fu13, node distance=0.4cm] (fu12)  {};
\node[above of=fu12, node distance=0.4cm] (fu11) {};
\node[right of=fu12] (du1) {};

\draw (fv1) -- (fu11) node[midway,circle,draw,fill=white] {};
\draw (fv2) -- (fu21) node[midway,circle,draw,fill=white] {};
\draw (fv3) -- (fu32) node[midway,circle,draw,fill=white] {};
\draw (fu31) -- (fu23) node[midway,circle,draw,fill=white] {};

\node[draw,fill=white,rectangle, rounded corners,fit= (fv1) (fv2) (fv3) (dv)] (Pv) {$P_v$};
\node[draw,rectangle, fill=white,rounded corners,fit= (fu11) (fu12) (fu13) (du1)] (Pu1) {$P_{u_1}$};
\node[draw,fill=white,rectangle, rounded corners,fit= (fu21) (fu22) (fu23) (du2)] (Pu2) {$P_{u_2}$};
\node[draw,fill=white,rectangle, rounded corners,fit= (fu31) (fu32) (fu33) (du3)] (Pu3) {$P_{u_3}$};

\end{scope}
\end{scope}

\end{tikzpicture}
\caption{a) Vertex-gadget $P_v$ and b) illustration of the reduction.}\label{fig:reduction}
\end{figure}

  From any optimal solution $S$ with $|S|=k$ covering $opt(I)$ edges of $G$, we construct a set $C = \{ f_v^j : 1 \leqslant j \leqslant 4, v \in S  \}$ of size $4k$.  
By construction, $C$ induces $12$ equivalence classes in each vertex gadget. Moreover, for each covered edge $e=xy$ in $G$, the corresponding edge-vertex $v_e$ in $G'$ forms a class of size~$1$ (which corresponds to one or two neighbor vertices $f_x^i$ and $f_y^j$ of $v_e$ in $C$). 
Finally, all vertices in $G'$ corresponding to edges not covered by $S$ in $G$, as well as all vertices in vertex gadgets corresponding to vertices not in $S$, belong to the same equivalence class (corresponding to the empty set). 
Thus, $C$ induces in $G'$ $12k+opt(I)+1$ equivalence classes, and hence we have 
\begin{equation}
opt(I')\geq 12k+opt(I)+1.
\label{eqn-Lreduc}
 \end{equation}

Conversely, given a solution $C'$ of $I'$ with $|C'|=4k$, we transform it into a solution for $I$ as follows. 
First, we show that $C'$ can be transformed into another solution $C''$ such that (1) $C''$ only contains vertices of the form $f_v^i$, (2) each vertex-gadget contains either zero or four vertices of $C''$, and (3) $C''$ does not induce less equivalence classes than $C'$.
To prove this, we proceed step by step by locally altering $C'$ whenever (1) and (2) are not satisfied, while ensuring (3).

Suppose first that some vertex-gadget $P_v$ of $G'$ contains at least four vertices of $C'$. Then, the number of equivalence classes involving some vertex of $V(P_v)\cap C'$ is at most twelve within $P_v$ (since there are only twelve vertices in $P_v$), and at most three outside $P_v$ (since there are only three vertices not in $P_v$ adjacent to vertices in $P_v$). Therefore, we can replace $V(P_v)\cap C'$ by the four special vertices of the set $F_v$ in $P_v$; this choice also induces twelve equivalence classes within $P_v$, and does not decrease the number of induced classes.

Next, we show that it is always best to select the four special vertices of $F_v$ from some vertex-gadget (rather than having several vertex-gadgets containing less than four solution vertices each). To the contrary, assume that there are two vertex-gadgets $P_u$ and $P_v$ containing respectively $a$ and $b$ vertices of $C'$, where $1 \leq b \leq a \leq 3$. Then, we remove an arbitrary vertex from $C'\cap V(P_v)$; moreover we replace $C'\cap V(P_u)$ with the subset $\{f_u^i, 1\leq i\leq a+1\}$, and similarly we replace $C'\cap V(P_v)$ with the subset $\{f_v^i, 1\leq i\leq b-1\}$. Before this alteration, the solution vertices within $V(P_u)\cup V(P_v)$ could contribute to at most $2^{a} + 2^b-2$ equivalence classes (we substract~$2$ because we do not count the ``empty'' class towards this contribution). After the modification, if $b=1$, this quantity is at least $2^{a+1}-2$: we substract~$1$ from $2^{a+1}$ for the empty class as above, and $1$ for the singleton class that could be affected by the addition of a solution vertex to $P_v$ (this class is represented by the adjacent edge-vertex $v_e$): indeed, this class may already exist because of some other vertex-gadget also adjacent to $v_e$. Thus in that case we have at least $2^a$ additional classes after the modification. If $b\geq 2$, after the modification we have at least $2^{a+1} + 2^{b-1}-2$ (nonempty) classes related to the gadgets $P_u$ and $P_v$, from which we must substract~$1$ as before if the vertex newly added to $P_v$ does not create a new singleton class. Thus, after the modification we have at least $2^{a+1} + 2^{b-1}-3-(2^{a} + 2^b-2)=2^a-2^{b-1}-1$ additional equivalence classes, which is positive. Hence, we conclude that it can be assumed that all vertex-gadgets (except possibly at most one) contain either zero or four vertices from the solution set $C'$.

Suppose that there exists one vertex-gadget $P_v$ with $i$ solution vertices, $1\leq i\leq 3$. We show that we may add $4-i$ solution vertices to it so that $C'\cap V(P_v)=F_v$. Consider the set of edge-vertices belonging to $C'$. Since we had $|C'|=4k$ and all but one vertex-gadget contain exactly four solution vertices, there are at least $4-i$ edge-vertices in the current solution set. Then, we remove an arbitrary set of $4-i$ edge-vertices from $C'$ and instead, we replace the set $V(P_v)\cap C'$ by the set $F_v$ of special vertices of $P_v$. We now claim that this does not decrease the number of classes induced by $C'$. 
Indeed, any edge-vertex, since it has degree~$2$, may contribute to at most three equivalence classes, and the $i$ solution vertices in $P_v$ can contribute to at most $2^{i}$ classes. 
Summing up, in the old solution set, these four vertices contribute to at most $3(4-i)+2^{i}$ classes, which is less than $12$ since $1\leq i\leq 3$.
In the new solution, these four vertices contribute to at least $12$ classes, which proves our above claim. 

We now know that there are $4i$ edge-vertices in $C'$, for some $i\leq k$. All other solution vertices are special vertices in some vertex-gadgets. By similar arguments as in the previous paragraph, we may select any four of them and replace them with some set $F_v$ of special vertices of some vertex-gadget $P_v$. Before this modification, these four solution vertices may have contributed to at most $3\cdot 4=12$ classes, while the new four solution vertices now contribute to at least $12$ classes.

Applying the above arguments, we have proved the existence of the required set $C''$ that satisfies conditions (1)--(3).



Therefore, we may now assume that the solution $C''$ contains no edge-vertices, and for each vertex-gadget $P_v$, $C''\cap V(P_v)\in \{\emptyset, F_v\}$. We define as solution $S$ for $I$ the set of vertices $v$ of $G$ for which $P_v$ contains four vertices of $C''$. Then, $val(S)=val(C')-12k-1$. Considering an optimal solution $C'$ for $I'$, we have $opt(I) \geq opt(I')-12k-1$. Using~\eqref{eqn-Lreduc}, we conclude that $opt(I') = opt(I)+12k+1 \leq opt(I) + 24 opt(I) + 1$ since $k \leqslant 2opt(I)$ and thus $opt(I') \leq 26 opt(I)$.

Moreover, we have $opt(I) - val(S) = opt(I') -12k - 1 - (val(C')-12k-1) =  opt(I')-val(C')$.

Thus, our reduction is an $L$-reduction with $\alpha=26$ and $\beta = 1$.
\end{proof}

\section{Conclusion}\label{sec:conclu}

\fullversion{We have defined and studied the problems \NEQCLkl (and its maximization variant \MaxNEQCL), that generalizes \PBDS and \VCDIM.}
\confversion{In this paper, we defined and studied generalization of \PBDS and \VCDIM.}

\fullversion{It would be interesting to settle the parameterized complexity of \NEQCLkl (for the natural parameter $k$, or equivalently, $\ell$) for neighborhood hypergraphs of certain graph classes such as planar graphs, line graphs or interval graphs. Indeed the problem remains NP-hard for these cases. Also, can one extend Theorem~\ref{thm:boundeddegree} to general hypergraphs, that is, is the problem FPT when restricted to hypergraphs of bounded maximum degree or hyperedge size?}

The probably most intriguing open question seems to be the approximation complexity of \MaxNEQCL.
As a first step to solve this question, one could determine whether such an approximation algorithm exists in superpolynomial time, or on special subclasses such as neighborhood hypergraphs of specific graphs. We have seen that there exist polynomial-time approximation algorithms with a sublinear ratio for special cases; does one exist in the general case?

Finally, we remark that a recent paper~\cite{inapproxVCDIM} shows that if \MaxVCDIM admits a $(2-o(1))$-approximation algorithm in time less than $n^{\log^{1-o(1)} n}$ (with $n$ the input size), then \textsc{$3$-Sat} on $N$ variables can be solved by a Monte-Carlo type algorithm with success probability $2/3$ and running in time $2^{o(N)}$ (contradicting the "randomized Exponential Time Hypothesis" conjecture, rETH). In Theorem~\ref{thm:VCDIM-inapprox}, we prove that a $2$-approximation algorithm for \MaxNEQCL would imply a $(2-o(1))$-approximation algorithm for \MaxVCDIM (with randomized polynomial time overhead). Unfortunately, our $o(1)$ seems to be smaller than the $o(1)$ from~\cite{inapproxVCDIM}, therefore we cannot apply their result directly. Thus, we leave it as an open problem whether one can rule out a $2$-approximation algorithm for \MaxNEQCL assuming the rETH (or some other complexity theory conjecture).



\end{document}